\date{\today}
\title{ On The Brownian Loop Measure}
\author{Yong Han \footnote{Institute of Mathematics, Academy of Mathematics and Systems Sciences, Chinese Academy of
Sciences and University of Chinese Academy of
Sciences, Beijing 100190, China. MAPMO, Universit\'e d'Orl\'eans Orl\'eans Cedex 2, France. Email: hanyong@amss.ac.cn}\and Yuefei Wang\footnote{Institute of Mathematics, Academy of Mathematics and Systems Sciences, Chinese Academy of
Sciences and University of Chinese Academy of
Sciences, Beijing 100190, China  Email: wangyf@math.ac.cn}
 \and Michel Zinsmeister\footnote{MAPMO, Universit\'e d'Orl\'eans Orl\'eans Cedex 2, France
     Email:zins@univ-orleans.fr}}
\newif\ifdraft
\long\def\note#1/{\ifdraft{\marginpar{{$\Longleftarrow$}} \bf [#1] }\fi}
\long\def\comment#1{} \long\def\old#1{}
\numberwithin{equation}{section}
\numberwithin{figure}{section}
\newtheorem{theorem}{Theorem}
\newtheorem{lemma}[theorem]{Lemma}
\theoremstyle{remark}
\theoremstyle{remark}\newtheorem{remark}[theorem]{Remark}
\let\qqed=\qed
\def\QED{\qqed\medskip}
\let\qed=\QED
\newcommand{\R}{\mathbb{R}}
\newcommand{\C}{\mathbb{C}}
\def\H{\mathbb{H}}
\def\D{\mathbb{D}}
\def\Im{{\rm Im}\,}
\def\Re{{\rm Re}\,}
\def\SLEkk#1/{$\mathrm{SLE}(#1)$}
\def\SLEr#1/{$\mathrm{SLE(\kappa;#1)}$}
\def\SLEkr#1;#2/{$\mathrm{SLE(#1;#2)}$}
\def\SLEk/{\SLEkk{\kappa}/}
\def\SLEtwo/{\SLEkk2/}
\def\SLE/{$\mathrm{SLE}$}
\def\standardg/{standard Gaussian}
\def\Ito/{It\^o}
\def \prob {{\bf P}}
\def\noopsort#1{}
\def\HY{\,_2F_1}
\def\hy{\,_3F_2}
\begin{document}

\maketitle
\begin{abstract}
In 2003 Lawler and Werner introduced the Brownian loop measure and studied some of its properties. Cardy and Gamsa has predicted  a formula
for the total mass of the Brownian loop measure on the set of simple loops in the upper half plane and disconnect two
given points from the boundary.
In this paper we give a rigorous proof of the formula using a result by Beliaev and Viklund and heavy computations.

\end{abstract}

{\bf Keywords:}
Brownian loop, SLE bubble, Brownian bubble,
Disconnect from boundary
\section{Introduction}

 The conformally  invariant scaling limits of  a series of planar lattice models  can be described by
 the one-parameter family of random fractal curves  \SLEkk\kappa/, which was introduced by Schramm. These models include
 site percolation on the triangular graph, loop erased random walk, Ising model, harmonic random walk, discrete Gaussian free field, FK-Ising model and uniform spanning tree. Using \SLE/
as a tool, many problems related to the properties of these models have been solved, such as  the arm exponents for these
models. There are also some variants
of \SLE/ (conformal loop ensemble, Brownian loop measure, Brownian bubble measure) that describe the scaling limit of the random  loops in these models. Therefore it is natural  to use \SLE/ to get properties of these loop measures. One of these applications is to use \SLEkk\frac{8}{3}/ to study the properties of the Brownian bubble measure and Brownian loop measure.
In fact, by  rescaling and letting the two  end points tend to one common point, one can get the Brownian bubble measure(up to  multiplication by a constant).

Recently Beliaev and Viklund \cite{beliaev2013some} found a formula for the probability that
two given points  lies to the left of the \SLEkk\frac{8}{3}/ curve and used it to study some connectivity functions for
\SLEkk\frac{8}{3}/ bubbles and to reconstruct the chordal restriction measure introduced by Lawler, Werner and Schramm \cite{lawler2003conformal}.
In this paper, we will follow their work and use the \SLEkk\frac{8}{3}/ bubble measure to derive the formula for the total mass of  the Brownian loop that disconnects two given points from the boundary. This formula was predicted by Cardy and Gamsa \cite{gamsa2006correlation}, and the formula we get here just differs by an additive constant from theirs.

In the following section, we  give a brief introduction to the topics that will be used in this paper, which include
the definition of \SLE/ processes, Brownian bubble measure, Brownian loop measure, \SLEkk\kappa/ bubble measure and the relation between these measures. In the third section, we give the proof of the main theorem.

\begin{theorem}
Denote by $\mu_\H^\text{loop}$  the Brownian loop measure on the upper half plane and by $\gamma$  a sample of the Brownian loop. Given two points
$z=x+iy, w=u+iv\in\H$, let $E(z,w)$ denote the event that $\gamma$ disconnects both $z$ and $w$ from the boundary of $\H$. Then we have
\begin{align}\label{cardy}
\begin{split}
\mu_\H^\text{loop}[E(z,w)]=&-\frac{\pi}{5\sqrt{3}}-\frac{1}{10}\eta\,_3F_2(1,\frac{4}{3},1;\frac{5}{3},2;\eta)-\frac{1}{10}\log(\eta(\eta-1))\\[4mm]
&+\frac{\Gamma(\frac{2}{3})^2}{5\Gamma(\frac{4}{3})}(\eta(\eta-1))^{\frac{1}{3}}\,_2F_1(1,\frac{2}{3};\frac{4}{3},\eta).
\end{split}
\end{align}
where
\begin{equation}\label{eta}
\eta=\eta(z,w)=-\frac{(x-u)^2+(y-v)^2}{4yv},
\end{equation}
 and $\,_3F_2,\,_2F_1$ are the hypergeometric functions.
\end{theorem}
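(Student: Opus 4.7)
My plan is to relate $\mu_\H^{\text{loop}}[E(z,w)]$ to the SLE(8/3) bubble measure via a boundary root decomposition, use the Beliaev--Viklund two-point left-passage formula for SLE(8/3) to evaluate the bubble mass in closed form, and then integrate along $\R$ to recover \eref{cardy}.

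First, conformal invariance of $\mu_\H^{\text{loop}}$ and of the event $E(z,w)$ implies that the left-hand side of \eref{cardy} depends on $(z,w)$ only through a single conformal invariant, and $\eta$ defined by \eref{eta} is (up to a monotone change of variables) such an invariant; it therefore suffices to verify \eref{cardy} along a one-parameter family, e.g.\ $z=i,\,w=iv$ with $v\in(0,\infty)$. Second, I would decompose the loop measure by taking the lowest point on $\R$ as a canonical root. Since the Brownian bubble measure at a boundary point $x$ is a constant multiple of the SLE(8/3) bubble measure $\mu_x^{\text{SLE}}$ (their outer boundaries having the same law up to scaling), this disintegration yields
\begin{equation*}
\mu_\H^{\text{loop}}\bigl[E(z,w)\cap\{\gamma\cap\R\neq\emptyset\}\bigr]=c\int_{\R}\mu_x^{\text{SLE}}\bigl[E(z,w)\bigr]\,dx,
\end{equation*}
with a separate contribution from loops in $\H$ that do not touch $\R$, computable from the whole-plane loop measure restricted to $\H$.

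Third, I would evaluate $\mu_x^{\text{SLE}}[E(z,w)]$. Realising the SLE(8/3) bubble at $x$ as a limit of chordal SLE(8/3) from $x$ to a nearby boundary point, the event that the bubble surrounds both $z$ and $w$ becomes in the limit a two-point left-passage event. Combining the Beliaev--Viklund formula with the restriction property of SLE(8/3) (exponent $5/8$) then gives a closed-form $\HY$ expression for $\mu_x^{\text{SLE}}[E(z,w)]$ as a hypergeometric function of the cross-ratios built from $(x,z,w)$. The integration over $x\in\R$ would be performed using classical hypergeometric identities (Euler--Pfaff transformations, contiguous relations), collapsing to the $\hy(1,\tfrac{4}{3},1;\tfrac{5}{3},2;\eta)$ and $\HY(1,\tfrac{2}{3};\tfrac{4}{3};\eta)$ terms in \eref{cardy}. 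The additive $-\pi/(5\sqrt{3})$ would appear as the constant of integration, fixed by a degenerate limit such as $\eta\to 0^-$ where $\mu_\H^{\text{loop}}[E(z,w)]$ admits a direct asymptotic.

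The main obstacle is this combined last step. The passage from the Beliaev--Viklund chordal probability to the $\sigma$-finite SLE(8/3) bubble measure is a delicate boundary limit that must be justified carefully, and reducing the resulting $x$-integral to the precise hypergeometric combination in \eref{cardy} is a nontrivial computation in the theory of hypergeometric functions; together these constitute the ``heavy computations'' advertised in the introduction. A secondary subtlety is the accounting of interior loops not touching $\R$, whose contribution must combine consistently with the boundary integral above to produce the full formula, and in particular match the $\log(\eta(\eta-1))$ piece and the precise additive constant.
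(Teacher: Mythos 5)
Your overall strategy (pass to SLE$(8/3)$ bubbles, use the Beliaev--Viklund two-point formula, then integrate and simplify hypergeometric expressions) is in the right spirit, but the decomposition at the heart of your argument is a genuine gap. Since $\mu_\H^{\text{loop}}$ is by definition the restriction of the whole-plane loop measure to loops whose trace lies in the open set $\H$, the event $\{\gamma\cap\R\neq\emptyset\}$ has measure zero, so the left-hand side of your displayed identity vanishes identically, while your ``separate contribution from loops in $\H$ that do not touch $\R$'' is precisely the quantity $\mu_\H^{\text{loop}}[E(z,w)]$ you set out to compute: the step is circular and nothing has been reduced. A scaling check also rules out a one-dimensional boundary integral: after translation the two-point bubble mass is $\frac14\Im(\frac{1}{z-x})\Im(\frac{1}{w-x})G(\sigma)$, which scales like $(\text{length})^{-2}$, so $\int_\R\mu_x^{\text{SLE}}[E(z,w)]\,dx$ has dimension $(\text{length})^{-1}$ and cannot equal the scale-invariant left side of \eref{cardy}.

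The missing idea is the decomposition of the loop measure according to the lowest point of the loop, which is an \emph{interior} point, not a point of $\R$: $\mu_\C^{\text{loop}}=\frac{1}{\pi}\int_\C\mu_{\H_y}^{\text{bub}}(x+iy)\,dx\,dy$ as in \eref{loop10}, where the bubble rooted at $x+iy$ lives in the half-plane $\H_y$ above height $y$. Restricting to loops in $\H$ that disconnect $z_0=x_0+iy_0$ and $w_0=u_0+iv_0$ (say $y_0\le v_0$) forces $0<y<y_0$ and leaves a genuinely two-dimensional integral $\frac{1}{\pi}\int_0^{y_0}\int_\R\mu_{\H_y}^{\text{bub}}(x+iy)[E(z_0,w_0)]\,dx\,dy$; the extra $y$-variable and the cutoff $y<y_0$ are exactly what produce the nontrivial dependence on the two points. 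Combining $\mu_\H^{\text{bub}}(0)=\frac{8}{5}\mu_{\text{SLE}(8/3)}^{\text{bub}}(0)$ with the Beliaev--Viklund formula \eref{characterization} gives the explicit integrand $\frac{2}{5\pi}\Im(\frac{1}{z_0-x-iy})\Im(\frac{1}{w_0-x-iy})G(\sigma(z_0-x-iy,w_0-x-iy))$; the $x$-integral is an elementary rational integral, and the $y$-integral reduces by substitution to $\frac14\int_0^{1-\sigma}\HY(1,\frac{4}{3};\frac{5}{3};t)\,dt$, yielding the closed form \eref{han}. Only afterwards do hypergeometric transformation identities, together with the value $\hy(1,\frac{4}{3},1;\frac{5}{3},2;1)=\frac{2\pi}{\sqrt{3}}$, convert \eref{han} into the Cardy--Gamsa form \eref{cardy}; in particular the constant $-\frac{\pi}{5\sqrt{3}}$ is forced by this identity between the two closed forms rather than fixed by a degenerate-limit matching as you propose.
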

 The formula \eqref{cardy} was first given by Cardy and Gamsa using  conformal field theory which assumes that $O(n)$ model has the scaling limit.
In fact \eqref{cardy} has a nicer form:
\begin{equation}\label{han}
\mu^\text{loop}[E(z,w)]=-\frac{1}{10}[\log\sigma+(1-\sigma)\,_3F_2(1,\frac{4}{3},1;\frac{5}{3},2;1-\sigma)],
\end{equation}
where
\begin{equation}\label{sigma}
\sigma=\sigma(z,w)=\frac{|z-w|^2}{|z-\bar{w}|^2}=\frac{(x-u)^2+(y-v)^2}{(x-u)^2+(y+v)^2},
\end{equation}
and $\,_3F_2$ is the hypergeometric function.
\begin{remark}
By the conformal invariance of Brownian loop measure (see \cite{lawler2004brownian}), for any simply connected domain $D\subset\C$ with $z,w\in D$, we can get the total mass of the Brownian loop in $D$ that disconnect both $z$ and $w$ from $\partial D$ by the conformal map from $D$ to $\H$. In particular, if $D=\D$, we choose the conformal map $\phi(z)=i\frac{1+z}{1-z}$ from $\D$ onto $\H$. Then the total mass of the Brownian loop measure in $\D$ that disconnects $z,w\in\D$ from $\partial\D$ is
$$
-\frac{1}{10}[\log\tilde{\sigma}+(1-\tilde{\sigma})\,_3F_2(1,\frac{4}{3},1;\frac{5}{3},2;1-\tilde{\sigma})],
$$
where $\tilde{\sigma}=\tilde{\sigma}(z,w)=\frac{|z-w|^2}{|1-z\bar{w}|^2}$.
\end{remark}
\begin{remark}
In fact, in Cardy and Gamsa's  original paper, their formula is just \eqref{cardy} without the constant term and this is a tiny mistake in their paper.
\end{remark}
\section{Background}
In this section  very brief introductions of chordal \SLE/, Brownian loop measure and {\SLE/ bubble measure} are given.
\subsection{Chordal \SLE/ process}
The chordal \SLE/ process from $0$ to $\infty$ in $
\H$ is the  random family of conformal maps $(g_t:t\geq 0)$ that satisfies
\begin{equation}\label{chordal}
\partial_tg_t(z)=\frac{2}{g_t(z)-\sqrt{\kappa}B_t},\,\,\,0\leq t<\tau(z),\,\,\,\,g_0(z)=z.
\end{equation}
where $B$ is a standard Brownian motion and $\tau(z)$ is the blow-up time for the differential equation \eqref{chordal}.
The \SLEkk\kappa/ process is generated by a continuous curve $\gamma$, (see \cite{rohde2011basic} and \cite{lawler2011conformal}). That is, the following limits exist:
$$
\gamma(t)=\lim\limits_{y\downarrow 0}g_t^{-1}(iy+\sqrt{\kappa}B_t).
$$
The curve $\gamma$ is a simple curve if and only if $\kappa\in(0,4]$, (see\cite{rohde2011basic}). The \SLEkk\kappa/ curve satisfies the scaling invariance property (see \cite{rohde2011basic}), i.e for any $r>0$, $r\gamma$ has the same distribution as $\gamma$ (with a time rescaling). So given any triple set $(D,a,b)$, where $D$ is a simply connected domain with two given boundary points $a$ and $b$, we  can define the \SLE/ process on $D$ from $a$ to $b$ by the conformal map from $\H$ to $D$ that sends $0$ to $a$ and $\infty$ to $b$.

The following
lemma will be useful in defining the \SLE/-bubble measure.
\begin{lemma}[see \cite{schramm2001percolation}]\label{lemma4}
Given $z=x+iy\in\H$, suppose $\kappa\in(0,4]$ and  $\gamma$ is the \SLEkk\kappa/ curve from $0$ to $\infty$ in $\H$. Then the
probability that  $\gamma$ passes the left of $z$ is
\begin{equation}\label{loop111}
p(z)=C\int_{-\infty}^{\frac{x}{y}}(1+t^2)^{-\frac{4}{\kappa}}dt,
\end{equation}
where $C=C(\kappa)$ is the constant that make the total integral above equal to $1$.
\end{lemma}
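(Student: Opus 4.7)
My plan is to prove Schramm's formula by following the standard SLE approach: reduce the question to studying a one-dimensional diffusion via the Loewner equation, then solve a linear second-order ODE for the left-passage probability.

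First, I would use the scaling invariance of \SLEk/: since $r\gamma$ has the same law as $\gamma$ (up to time change) for every $r>0$, the event that $\gamma$ passes to the left of $z$ depends only on $x/y=\Re(z)/\Im(z)$, so $p(z)=h(x/y)$ for some function $h$ with $h(-\infty)=0$ and $h(+\infty)=1$. Next I would track $z$ under the flow: set $Z_t=g_t(z)-\sqrt{\kappa}B_t=X_t+iY_t$. The Loewner equation \eqref{chordal} combined with It\^o's formula gives
\begin{align*}
dX_t &= \frac{2X_t}{X_t^2+Y_t^2}\,dt-\sqrt{\kappa}\,dB_t,\\
dY_t &= -\frac{2Y_t}{X_t^2+Y_t^2}\,dt.
\end{align*}
Since $\kappa\le 4$, the curve is simple, $z$ is never swallowed, so $Y_t>0$ for all $t$ and the above SDE is defined for all $t\ge 0$.

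I would then introduce the scale-invariant variable $U_t=X_t/Y_t$. A direct application of It\^o's formula yields
$$dU_t=\frac{4U_t}{Y_t^2(1+U_t^2)}\,dt-\frac{\sqrt{\kappa}}{Y_t}\,dB_t.$$
Performing the time change $s=\int_0^t Y_r^{-2}\,dr$ turns the factors of $Y_t$ into an autonomous diffusion: in the new time, $\widehat U_s$ satisfies
$$d\widehat U_s=\frac{4\widehat U_s}{1+\widehat U_s^2}\,ds-\sqrt{\kappa}\,d\widehat B_s,$$
with $\widehat B$ a standard Brownian motion. Under this time change the event ``$\gamma$ passes to the left of $z$'' corresponds to $\widehat U_s\to+\infty$ as $s\to\infty$ (geometrically, the image point drifts to the right of the driving function on the half plane side).

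Writing $h(u)=\prob[\widehat U_\infty=+\infty\mid \widehat U_0=u]$, optional stopping applied to $h(\widehat U_s)$ together with the Markov property shows that $h$ is harmonic for the generator of $\widehat U$, so
$$\frac{\kappa}{2}h''(u)+\frac{4u}{1+u^2}h'(u)=0,$$
with boundary values $h(-\infty)=0$ and $h(+\infty)=1$. This is a first-order linear ODE in $h'$; separation of variables gives $h'(u)=C(1+u^2)^{-4/\kappa}$, and integrating with the normalization $h(+\infty)=1$ produces exactly \eqref{loop111} upon substituting $u=x/y$.

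The step I expect to be the most delicate is the boundary analysis: proving that $\widehat U_s$ actually converges to $\pm\infty$ almost surely (so that $h(-\infty),h(+\infty)$ are the only relevant boundary data) and showing that the time change is surjective onto $[0,\infty)$, i.e.\ $\int_0^\infty Y_t^{-2}\,dt=\infty$. Both facts rely on the simplicity of the curve for $\kappa\le 4$ and on a careful analysis of the drift $\tfrac{4u}{1+u^2}$, which pushes $\widehat U$ away from the origin for large $|u|$ while remaining integrable near $0$; once these ingredients are in place, identifying the left-passage event with $\{\widehat U_\infty=+\infty\}$ is routine from the definition of $g_t$.
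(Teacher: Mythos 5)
The paper does not prove this lemma at all; it is quoted with a citation to Schramm's percolation-formula paper, and your outline is precisely the argument from that reference: reduce to the ratio process $U_t=X_t/Y_t$, time-change to the autonomous diffusion $d\widehat U_s=\frac{4\widehat U_s}{1+\widehat U_s^2}\,ds-\sqrt{\kappa}\,d\widehat B_s$, and solve $\frac{\kappa}{2}h''+\frac{4u}{1+u^2}h'=0$, whose solution $h'(u)=C(1+u^2)^{-4/\kappa}$ gives \eqref{loop111}. Your SDE computations and the identification of the left-passage event with $\{\widehat U_\infty=+\infty\}$ are correct, and you correctly flag the genuinely delicate points (that $\int_0^\infty Y_t^{-2}\,dt=\infty$ and that $\widehat U_s\to\pm\infty$ a.s.). The only logical wrinkle is the order of the argument around the ODE: rather than asserting that the hitting probability $h$ is harmonic for the generator (which presupposes $h\in C^2$), the standard route is to define $h$ as the explicit solution of the ODE with the given boundary values, check via It\^o that $h(\widehat U_s)$ is a bounded martingale, and then use martingale convergence together with $h(\pm\infty)=1,0$ to identify $h(u)$ with the desired probability; with that reversal your proof is complete and matches the cited source.
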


\subsection{Brownian loop measure and  bubble measure}
In this section, we will introduce several measures on the space of continuous curves in the plane. To keep the present article short,  we will not provide the detailed  discussions but instead refer the reader to the fifth chapter of Lawler's
book \cite{lawler2008conformally} and \cite{lawler2004brownian}.

Let $\mu(z;t)$ be the law of a complex Brownian motion $(B_s:0\leq s\leq t)$ starting from $z$. It can be written as
$$
\mu(z;t)=\int_{\C} \mu(z,w;t)dw
$$
where the above integral may be regarded as the integral of measure-valued functions. Here $\mu(z;t)$ and $\mu(z,w;t)$ are regarded as measures on the space of curves in the plane.
Using the density function of the complex Brownian motion, we can see that  the total mass of $\mu(z,w;t)$ is $\frac{1}{2\pi t}\exp\{-\frac{1}{2t}|z-w|^2\}$.

Let $\mu(z,w)$ be the measure defined by
$\mu(z,w)=\int_0^\infty \mu(z,w;t)$. This is a $\sigma-$finite infinite measure. If $D\subset\C$ is a simply connected
domain with nice boundary and $z,w\in D$, we can define $\mu_D(z,w)$ be the restriction of $\mu(z,w)$ on the space of curves that lie inside $D$. If $z\neq w$, the total measure of $\mu_D(z,w)$ is  $\pi G_D(z,w)$, where $G_D(z,w)$ is the Green function on $D$.

If $D$ is a simply connected domain with nice boundary, let $B$ be a complex Brownian motion starting from $z\in D$ and $\tau_D$ the exit time. Denote $\mu_D(z,\partial D)$ the law of $(B_t:0\leq t\leq \tau_D)$, we can write
$$
\mu_D(z,\partial D)=\int_{\partial D}\mu_D(z,w)dw.
$$
Here  $\mu_D(z,w)$  can be regarded as a measure on the space of curves in $D$ from $z$ to $w\in\partial D$, and the total mass
of $\mu_D(z,w)$ is the the Poisson kernel $H_D(z,w)$.
For $z\in D, w\in\partial D$,  $\mu_D(z,w)$ can also be equivalently defined by  the limits
$$
\mu_D(z,w)=\lim\limits_{\epsilon\rightarrow 0}\frac{1}{2\epsilon}\mu_D(z,w+\epsilon\textbf{n}_w),
$$
where $\textbf{n}_w$ is the inner normal  at $w$.

 And similarly for $z,w\in\partial D$, we can also define
$$
\mu_D(z,w)=\lim\limits_{\epsilon\rightarrow 0}\frac{1}{2\epsilon^2}\mu_D(z+\epsilon\textbf{n}_z,w+\epsilon\textbf{n}_w)
$$
It can be showed that the above limits exist in the sense of Prohorov convergence(see Chapter 5 of \cite{lawler2008conformally}).

Given $z\in \partial D$, the \textbf{Brownian bubble measure }$\mu_D^{\text{bub}}(z)$ is defined as the limit
$$
\mu_D^{\text{bub}}(z):=\lim\limits_{w\in\partial D, w\rightarrow z}\pi \mu_D(z,w).
$$
The \textbf{Brownian loop measure} is defined as following:
$$
\mu^{loop}_\C:=\int_\C \frac{1}{t_\gamma}\mu(z,z)dz=\int_\C\int_0^\infty \frac{1}{t_\gamma}\mu(z,z;t)dtdz.
$$
Since $\mu(z,z)$ is a measure defined on loops with $z$ as a marked point(called a root), the  Brownian loop measure should
be understood as the above integral of measures by forgetting the root. For any  domain $D$, let $\mu_D^{\text{loop}}$ be the restriction of the Brownian loop measure on the space of loops inside $D$.

For any $a\in\R$, define $\H_a:=\{x+iy\in\C:y>a\}$, according to the lowest point of the Brownian loop, the Brownian loop   can be decomposed into the following integral  of Brownian bubbles(see \cite{lawler2004brownian}):
\begin{equation}\label{loop10}
\mu_\C^{\text{loop}}=\frac{1}{\pi}\int_\C\mu_{\H_y}^{\text{bub}}(x+iy)dxdy.
\end{equation}
\eqref{loop10} will be very important in our computation.

\subsection{\SLE/ bubble measure}
In this section we will define the \SLE/ bubble measure and give the relation between \SLEkk\frac{8}{3}/ bubble and Brownian bubble measure.

Suppose $\kappa\in(0,4], \epsilon>0$ and $\gamma^\epsilon$ is the \SLEkk\kappa/ curve from $0$ to $\epsilon$ in the upper half plane.  Let $\mu^{\epsilon}$ denote the law of $\gamma^\epsilon$.
\begin{lemma}
The  following limit exists:
\begin{equation}
\mu_{\text{SLE}(\kappa)}^{\text{bub}}(0)=\lim\limits_{\epsilon\rightarrow 0}\epsilon^{1-\frac{8}{\kappa}}\mu^{\epsilon}.
\end{equation}
We call $\mu_{\text{SLE}(\kappa)}^{\text{bub}}(0)$ the \textbf{\SLEkk\kappa/-bubble measure}.
\end{lemma}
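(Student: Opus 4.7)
The plan is to construct $\mu_{\mathrm{SLE}(\kappa)}^{\mathrm{bub}}(0)$ as a vague limit of the measures $\epsilon^{1-8/\kappa}\mu^\epsilon$, tested against observables supported on curves bounded away from the trivial constant loop at $0$.

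First I would recognize that \SLEk/ from $0$ to $\epsilon$ in $\H$ is the $\mathrm{SLE}(\kappa,\kappa-6)$ process with force point at $\epsilon$. A direct \Ito/'s formula calculation, using the Loewner equation $\partial_t g_t(z)=2/(g_t(z)-\xi_t)$ with $\xi_t=\sqrt\kappa B_t$, verifies that
\[
N_t^\epsilon = g_t'(\epsilon)^{(6-\kappa)/(2\kappa)}\bigl(g_t(\epsilon)-\xi_t\bigr)^{(\kappa-6)/\kappa}
\]
is a positive local martingale under the law $\mu^\infty$ of chordal \SLEk/ from $0$ to $\infty$, with $N_0^\epsilon=\epsilon^{(\kappa-6)/\kappa}$. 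On the filtration $\mathcal{F}_t$ up to the hitting time of $\epsilon$, the law $\mu^\epsilon$ is absolutely continuous with respect to $\mu^\infty$ with Radon-Nikodym density $N_t^\epsilon/N_0^\epsilon$, and the associated Girsanov drift matches the $\mathrm{SLE}(\kappa,\kappa-6)$ drift of the driving function.

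Then I would combine this Radon-Nikodym representation with the scale invariance $\gamma^\epsilon\stackrel{d}{=}\epsilon\gamma^1$ to rewrite, for any observable $F$ supported on curves of diameter at least $r>0$,
\[
\epsilon^{1-8/\kappa}\int F\,d\mu^\epsilon = \epsilon^{1-8/\kappa}\int F(\epsilon\gamma)\,d\mu^1(\gamma).
\]
The integrand on the right is supported on the tail $\mathrm{diam}(\gamma)\geq r/\epsilon$ under $\mu^1$, and finiteness of the limit corresponds to the boundary one-arm estimate $\mu^1[\mathrm{diam}(\gamma)\geq R]\asymp R^{-(8/\kappa-1)}$ as $R\to\infty$; the renormalization exponent $1-8/\kappa$ is chosen precisely to compensate this decay. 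On a candidate level, the pointwise limits $g_{\tau_K}(\epsilon)\to g_{\tau_K}(0^+)$ and $g_{\tau_K}'(\epsilon)\to g_{\tau_K}'(0^+)$ (for any stopping time $\tau_K$ associated to a compact set $K\subset\overline{\H}\setminus\{0\}$) give an explicit identification of the limit in terms of $N_t^\epsilon$ and $\mu^\infty$.

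The main obstacle will be making this vague convergence rigorous: one must establish the uniform integrability needed to pass the $\epsilon\to 0$ limit under the integral sign, noting that $N_t^\epsilon/N_0^\epsilon$ is a \emph{strict} local martingale for the SLE from $0$ to $\epsilon$ setting, so dominated convergence is not automatic. The standard route is to verify convergence first for cylinder events $A\in\mathcal{F}_{\tau_K}$ associated to the first entry time $\tau_K$ into a compact set $K\subset\overline{\H}\setminus\{0\}$, using the quantitative control from the boundary exponent together with the pointwise limits above, and then to extend to all bounded Borel observables supported away from the trivial loop at $0$ by a monotone class argument. This yields $\mu_{\mathrm{SLE}(\kappa)}^{\mathrm{bub}}(0)$ as the claimed $\sigma$-finite (infinite) measure on loops rooted at $0$.
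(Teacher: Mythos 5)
Your reduction is set up correctly as far as it goes: \SLEkk\kappa/ from $0$ to $\epsilon$ is indeed $\mathrm{SLE}(\kappa,\kappa-6)$ with force point $\epsilon$, your $N_t^\epsilon$ has the right exponents, and for $\kappa\le 4$ the chordal curve never swallows $\epsilon$, so the Radon--Nikodym relation on $\mathcal{F}_{\tau_K}$ is legitimate. The gap is the step where you claim that the pointwise limits $g_{\tau_K}(\epsilon)\to g_{\tau_K}(0^+)$, $g_{\tau_K}'(\epsilon)\to g_{\tau_K}'(0^+)$ ``give an explicit identification of the limit.'' Since $N_0^\epsilon=\epsilon^{(\kappa-6)/\kappa}$, for a cylinder event $A\in\mathcal{F}_{\tau_K}$ one has
\[
\epsilon^{1-\frac{8}{\kappa}}\mu^\epsilon[A]
=\epsilon^{-\frac{2}{\kappa}}\,\mathbf{E}^{\mu^\infty}\Bigl[\mathbf{1}_A\,g_{\tau_K}'(\epsilon)^{\frac{6-\kappa}{2\kappa}}\bigl(g_{\tau_K}(\epsilon)-\xi_{\tau_K}\bigr)^{\frac{\kappa-6}{\kappa}}\Bigr],
\]
and the prefactor $\epsilon^{-2/\kappa}$ blows up while $g_{\tau_K}'(0^+)=0$ almost surely: the uniformizing map of a nontrivial hull degenerates at its root (already for the vertical slit, $g_t(z)=\sqrt{z^2+4t}$ gives $g_t'(\epsilon)\asymp\epsilon$). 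So the expectation tends to $0$ and the limit is a genuine $0\times\infty$ competition; the nonzero answer is carried by the precise small-$\epsilon$ asymptotics of the derivative moment $\mathbf{E}^{\mu^\infty}[\mathbf{1}_A\,g_{\tau_K}'(\epsilon)^{(6-\kappa)/(2\kappa)}(\cdots)]$, which would have to be shown to behave like $c_A\,\epsilon^{2/\kappa}$ with an identified constant. That sharp boundary-derivative estimate is the entire content of the lemma in your route and is not supplied; similarly, the one-arm bound you invoke is only an $\asymp$ statement, which gives boundedness (tightness) of the renormalized masses but not existence of the limit.

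By contrast, the paper sidesteps the Girsanov machinery entirely: it tests $\epsilon^{1-8/\kappa}\mu^\epsilon$ against the full-path events $\{\gamma\text{ disconnects }z\text{ from }\infty\}$, for which Schramm's left-passage formula (Lemma \ref{lemma4}) together with the M\"obius map $F_\epsilon(z)=\frac{\epsilon z}{z+1}$ gives the closed-form probability $p(F_\epsilon^{-1}(z))$, and the renormalized limit is then an explicit computation, \eqref{loop12}; the bubble measure is read off from these finite limiting masses (and these are exactly the events used later in the paper). Your approach can in principle be completed---it is close to how \SLE/ bubble measures are built elsewhere in the literature---but only after replacing the pointwise-limit step by a quantitative boundary-exponent estimate for $g_{\tau_K}'(\epsilon)$, at which point the explicit-formula route is considerably shorter.
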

\begin{proof}
We only need to show that the limit restricted to some generated algebras that consist of finite mass exists.
Here we choose the measurable sets $\{\gamma:\gamma\text{ disconnects } z\text{ from }\infty\}$ for fixed $z\in \H$.
By the definition of the \SLEkk\kappa/ from $0$ to $\epsilon$, we choose the auto-conformal map $F_\epsilon(z)=\frac{\epsilon z}{z+1}$ that sends $\infty$ to $\epsilon$ and fixes $0$.  We have
$$
\prob[\gamma\text{ disconnects } z\text{ from }\infty]=p(F_\epsilon^{-1}(z)),
$$
where $p(z)$ is defined in \eqref{loop111}. Therefore
\begin{equation}\label{loop12}
\lim\limits_{\epsilon\rightarrow 0}\epsilon^{1-\frac{8}{\kappa}}p(F_\epsilon^{-1}(z))=\frac{\Gamma(\frac{4}{\kappa})}{\sqrt{\pi}\Gamma(\frac{8-\kappa}{2\kappa})(\frac{8}{\kappa}-1)}(\frac{x^2+y^2}{y})^{1-\frac{8}{\kappa}}
\end{equation}
So for fix $z\in\H$, if we denote $\mu^{\epsilon}(z)$ the restriction of $\mu^\epsilon$ restricted to the curves that disconnect $z$ from $\infty$, then by the above equation, we know that the limit
$$
\mu_{\text{SLE}(\kappa)}^{\text{bub}}(0,z):=\lim\limits_{\epsilon\rightarrow 0}\epsilon^{1-\frac{8}{\kappa}}\mu^{\epsilon}(z)
$$
exists and therefore we can define $\mu_{\text{SLE}(\kappa)}^{\text{bub}}(0)$ as the limit of $\mu_{\text{SLE}(\kappa)}^{\text{bub}}(0,z)$
as $z$ tends to zero.
\end{proof}
 If $\kappa=\frac{8}{3}$, from \eqref{loop12}, we get that the total mass of the \SLEkk\frac{8}{3}/-bubble that disconnects
 a given point $z=x+iy\in\H$ is $\frac{1}{4}(\frac{y}{x^2+y^2})^2=\frac{1}{4}(\Im\frac{1}{z})^2$ which corresponds to the
 part (a) of proposition $3.1$ in \cite{beliaev2013some}. In fact, \cite{beliaev2013some} also gives the measure of the \SLEkk\frac{8}{3}/-bubble that disconnects
 two points $z,w\in\H$  from $\infty$ which we will state as the following lemma.
\begin{lemma}[see \cite{beliaev2013some}]\label{lemma6}
Let $E(z,w)$  be the event that two  points $z,w\in\H$ are disconnected from $\infty$ by a \SLEkk\frac{8}{3}/ curve from $0$ to $\epsilon$, then
\begin{equation}
\mu^\epsilon[E(z,w)]=\frac{1}{4}\Im(\frac{1}{z})\Im(\frac{1}{w})G(\sigma(z,w))\epsilon^2+O(\epsilon^3).
\end{equation}
where
$\sigma$ is defined as \eqref{sigma} and
\begin{equation}\label{loop11}
G(t)=1-t\,_2F_1(1,\frac{4}{3};\frac{5}{3};1-t).
\end{equation}
Here $\,_2F_1$ is the hypergeometric function.
\end{lemma}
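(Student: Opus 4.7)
The strategy has three steps: first, a M\"obius change of coordinates reduces the claim to a statement about chordal SLE$_{8/3}$ from $0$ to $\infty$; second, a martingale/PDE argument yields the two-point joint right-passage probability for that curve in terms of ${}_2F_1(1,4/3;5/3;1-\sigma)$; third, a boundary Taylor expansion in $\epsilon$ extracts the claimed leading $\epsilon^2$ coefficient.

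\smallskip\noindent\emph{Step 1 (conformal reduction).} The map $F_\epsilon(z)=\epsilon z/(z+1)$ is the automorphism of $\H$ fixing $0$ and sending $\infty$ to $\epsilon$, so by conformal invariance of chordal SLE, $\gamma^\epsilon\stackrel{d}{=}F_\epsilon(\tilde\gamma)$ where $\tilde\gamma$ is SLE$_{8/3}$ from $0$ to $\infty$. A direct model check ($\tilde\gamma=$ positive imaginary axis pushes forward under $F_\epsilon$ to the semicircle $|z-\epsilon/2|=\epsilon/2$) shows that the bounded complementary component of $\gamma^\epsilon\cup[0,\epsilon]$ pulls back under $F_\epsilon^{-1}$ to the component of $\H\setminus\tilde\gamma$ adjacent to the positive real axis --- call this the ``right-hand'' side of $\tilde\gamma$. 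Hence $\mu^\epsilon[E(z,w)]=\mathrm{P}(F_\epsilon^{-1}(z),F_\epsilon^{-1}(w))$, where $\mathrm{P}(\zeta,\eta)$ is the probability that $\tilde\gamma$ leaves both $\zeta,\eta\in\H$ on its right.

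\smallskip\noindent\emph{Step 2 (two-point formula).} Lemma~\ref{lemma4} at $\kappa=8/3$ (exponent $4/\kappa=3/2$) integrates to the marginal $p(\zeta)=\tfrac12(1+\Re\zeta/|\zeta|)$. For the joint object, the Markov property of SLE shows that $M_t:=\mathrm{P}\bigl(g_t(\zeta)-\sqrt{8/3}\,B_t,\,g_t(\eta)-\sqrt{8/3}\,B_t\bigr)$ is a local martingale, and setting its It\^o drift to zero at $t=0$ produces a linear second-order PDE for $\mathrm{P}$ on $\H\times\H$. Scale invariance of $\tilde\gamma$ together with a boundary-scaling analysis, as both $\zeta,\eta$ approach a common point of $\partial\H$, leads to the asymptotic ansatz
\[
\mathrm{P}(\zeta,\eta) = \tfrac14\,(\Im\zeta)(\Im\eta)\,H(\sigma(\zeta,\eta)) + \text{(higher order)},
\]
with $\sigma$ the M\"obius-invariant cross-ratio of \eqref{sigma}. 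The PDE then collapses to a hypergeometric ODE for $H$; the two boundary conditions --- $H(0)=1$ (coincidence limit, where the two-point event reduces to the single-point event and must match $p(\zeta)\sim(\Im\zeta)^2/4$) and $H(1)=0$ (maximal-separation limit, where one of the points is driven to $\partial\H$ so the two-point event becomes impossible) --- pin down
\[
H(\sigma)=G(\sigma)=1-\sigma\,{}_2F_1\!\bigl(1,\tfrac43;\tfrac53;\,1-\sigma\bigr).
\]

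\smallskip\noindent\emph{Step 3 (expansion in $\epsilon$, and main obstacle).} From $F_\epsilon^{-1}(z)=z/(\epsilon-z)=-1-\epsilon/z+O(\epsilon^2)$ one reads off $\Im F_\epsilon^{-1}(z)=-\epsilon\,\Im(1/z)+O(\epsilon^2)$ (and analogously for $w$), while M\"obius invariance of $\sigma$ gives $\sigma(F_\epsilon^{-1}(z),F_\epsilon^{-1}(w))=\sigma(z,w)$ exactly. Substituting into Step~2 yields
\[
\mu^\epsilon[E(z,w)] = \tfrac14\,\Im(1/z)\,\Im(1/w)\,G(\sigma(z,w))\,\epsilon^2+O(\epsilon^3),
\]
the claim. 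A sanity check with $z=w$: then $G(0)=1$ and the formula reduces to the single-point asymptotic $p(F_\epsilon^{-1}(z))=\tfrac{\epsilon^2}{4}(\Im(1/z))^2+O(\epsilon^3)$, which one verifies directly by a second-order Taylor expansion of Schramm's marginal around the boundary point $-1$. The crux is Step~2: reducing the two-variable It\^o PDE to a single-variable hypergeometric ODE in $\sigma$ and identifying the particular solution ${}_2F_1(1,4/3;5/3;1-\sigma)$ within the two-parameter family of hypergeometric solutions, by pinning down the integration constants through the coincidence and maximal-separation boundary conditions. Steps~1 and 3 are essentially routine conformal bookkeeping and boundary Taylor expansion.
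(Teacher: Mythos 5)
The paper does not prove Lemma \ref{lemma6}: it is imported wholesale from \cite{beliaev2013some}, so there is no internal argument to compare yours against. Your Steps 1 and 3 are correct and coincide with how the paper (and the reference) handle the reduction: the automorphism $F_\epsilon(z)=\epsilon z/(z+1)$, the identification of the disconnection event with a joint passage event for the chordal curve from $0$ to $\infty$, the M\"obius invariance of $\sigma$ in \eqref{sigma}, the expansion $\Im F_\epsilon^{-1}(z)=-\epsilon\,\Im(1/z)+O(\epsilon^2)$, and the $z=w$ consistency check against Schramm's formula (Lemma \ref{lemma4} at $\kappa=8/3$) all check out.

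The genuine gap is Step 2, which you yourself flag as the crux but do not execute; as written it is an ansatz plus an assertion, not a derivation. Concretely: (i) the ansatz $\mathrm{P}(\zeta,\eta)=\tfrac14(\Im\zeta)(\Im\eta)H(\sigma)+\cdots$ is dimensionally inconsistent (a probability equated to a quantity of dimension length squared); the correct leading behaviour near a boundary point $\xi_0$ carries a factor $\xi_0^{-2}$, and the claim that the coefficient depends on $(\zeta,\eta)$ only through the cross-ratio $\sigma$, and not separately on $\Im\zeta/\Im\eta$ or on the limit point, is itself substantive and unproved. (ii) Substituting only the leading term of an asymptotic expansion into the two-variable It\^o PDE does not yield a closed ODE for $H$ unless the remainder's contribution at the relevant order is controlled; this is precisely why \cite{beliaev2013some} establish an exact closed-form two-point passage formula (the one quoted in Section 4 of this paper) and only afterwards Taylor-expand, rather than manipulating boundary asymptotics of an unknown solution. (iii) Even granting the hypergeometric ODE, the conditions $H(0)=1$ and $H(1)=0$ pin down $H$ only after the characteristic exponents at the two singular points are examined: since $G(\sigma)=1-c\,\sigma^{1/3}+O(\sigma)$ near $\sigma=0$, both local solution branches at $0$ are bounded and the value at $0$ alone does not exclude either. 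The strategy is the right one and Steps 1 and 3 are sound conformal bookkeeping, but the proposal is a plan for the Beliaev--Viklund computation rather than a proof of the lemma.
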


Notice that when $\kappa=\frac{8}{3}$, it holds that $1-\frac{8}{\kappa}=-2$, so we have
\begin{equation}\label{characterization}
\mu_{\text{SLE}}^{\text{bub}}(0)[E(z,w)]=\frac{1}{4}\Im(\frac{1}{z})\Im(\frac{1}{w})=\frac{1}{4}\frac{yv}{(x^2+y^2)(u^2+v^2)}G(\sigma(z,w)).
\end{equation}

\SLEkk\frac{8}{3}/-bubble measure is  closely related to Brownian bubble, in fact they only differ by a constant. Given a loop $\gamma$ such that $\gamma(0)=\gamma(t_\gamma)=0$ and $\gamma(0,t_\gamma)\subset\H$, call
the complement of the unbounded connected component of $\H\backslash\gamma$ the hull enclosed by $\gamma$. In the following we will let $\gamma$ denote both the loop and hull enclosed by it.
\begin{lemma}[see \cite{lawler2003conformal}]\label{lemma5}
As measures on the hulls enclosed by loops, the following holds:
$$
\mu_\H^{\text{bub}}(0)=\frac{8}{5}\mu_{\text{SLE}(\frac{8}{3})}^{\text{bub}}(0).
$$
\end{lemma}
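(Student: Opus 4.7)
The plan is to express $\mu_\H^\text{loop}[E(z,w)]$ as a double integral via the lowest-point decomposition \eqref{loop10} of the Brownian loop measure, convert the bubble measure to an \SLEkk\frac{8}{3}/-bubble measure using Lemma \ref{lemma5}, substitute the explicit formula from Lemma \ref{lemma6} (equation \eqref{characterization}), and then evaluate the resulting integral in closed form.

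Concretely, a loop in $\H$ whose lowest point is $a+ib$ (so $b>0$) disconnects $z,w$ from $\partial\H$ exactly when the bubble based at $a+ib$ in $\H_b$ disconnects $z,w$ from $\R+ib$. By translation invariance, combined with $\mu_\H^{\text{bub}}(0)=\frac{8}{5}\mu_{\text{SLE}(8/3)}^{\text{bub}}(0)$ and \eqref{characterization}, this yields
\begin{equation*}
\mu_\H^\text{loop}[E(z,w)]=\frac{2}{5\pi}\int_{0}^{\min(y,v)}\!\!\int_{-\infty}^{\infty}\frac{(y-b)(v-b)\,G(\sigma_b)}{\bigl((x-a)^2+(y-b)^2\bigr)\bigl((u-a)^2+(v-b)^2\bigr)}\,da\,db,
\end{equation*}
where $\sigma_b=\sigma(z-a-ib,w-a-ib)=\frac{(x-u)^2+(y-v)^2}{(x-u)^2+(y+v-2b)^2}$ is independent of $a$, and the upper limit $\min(y,v)$ reflects the requirement that the bubble can enclose both points.

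I would then compute the inner integral over $a$ by residues: standard partial fractions give
\begin{equation*}
\int_{-\infty}^{\infty}\frac{(y-b)(v-b)\,da}{\bigl((x-a)^2+(y-b)^2\bigr)\bigl((u-a)^2+(v-b)^2\bigr)}=\frac{\pi(y+v-2b)}{(x-u)^2+(y+v-2b)^2}.
\end{equation*}
Next, substituting $s=\sigma_b$ produces a remarkable Jacobian simplification: $\frac{y+v-2b}{(x-u)^2+(y+v-2b)^2}\,db=\frac{1}{4\sigma_b}\,d\sigma_b$, with $s$ running from $\sigma(z,w)$ (at $b=0$) to $1$ (at $b=\min(y,v)$). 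The double integral collapses to
\begin{equation*}
\mu_\H^\text{loop}[E(z,w)]=\frac{1}{10}\int_{\sigma(z,w)}^{1}\frac{G(s)}{s}\,ds.
\end{equation*}
Inserting $G(s)=1-s\,_2F_1(1,\frac{4}{3};\frac{5}{3};1-s)$ and applying the series identity $\int_{0}^{x}\,_2F_1(a,b;c;t)\,dt=x\,_3F_2(a,b,1;c,2;x)$ (after $t=1-s$) immediately produces the cleaner form \eqref{han}.

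The final step is to rewrite \eqref{han} in the variable $\eta=\sigma/(\sigma-1)$ in order to match \eqref{cardy}. The elementary identity $1-\sigma=1/(1-\eta)$ together with $\log\sigma=\log(\eta(\eta-1))-2\log(1-\eta)$ handles the logarithmic term, while the hypergeometric piece requires a connection formula that expresses $\,_3F_2(1,\frac{4}{3},1;\frac{5}{3},2;1-\sigma)$ as a linear combination of $\,_3F_2(1,\frac{4}{3},1;\frac{5}{3},2;\eta)$, a cubic-type term $(\eta(\eta-1))^{1/3}\,_2F_1(1,\frac{2}{3};\frac{4}{3};\eta)$, and a constant. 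I expect this to be the main obstacle: the appearance of the cube root and the parameters $2/3,4/3$ in \eqref{cardy} points to a cubic transformation (in the style of Goursat), and the additive constant $-\pi/(5\sqrt{3})$ should emerge from the corresponding connection coefficient via the reflection formula $\Gamma(\frac{1}{3})\Gamma(\frac{2}{3})=2\pi/\sqrt{3}$. This constant, invisible at the level of differential equations, is exactly the additive discrepancy with the Cardy--Gamsa prediction noted in the second remark.
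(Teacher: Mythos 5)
Your proposal does not prove the statement at hand. The statement is Lemma \ref{lemma5} itself, the identity $\mu_\H^{\text{bub}}(0)=\frac{8}{5}\mu_{\text{SLE}(\frac{8}{3})}^{\text{bub}}(0)$, but your argument \emph{invokes} this identity as a known input (``convert the bubble measure to an \SLEkk\frac{8}{3}/-bubble measure using Lemma \ref{lemma5}'') and then goes on to sketch the proof of the main theorem: the lowest-point decomposition, the integral over $a$ by residues, the substitution $s=\sigma_b$, and the passage from \eqref{han} to \eqref{cardy}. None of that addresses where the constant $\frac{8}{5}$ comes from or why the two bubble measures are proportional at all, so as a proof of Lemma \ref{lemma5} it is circular.

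What is missing is the actual identification of the two measures. The paper's argument has two ingredients. First, the Brownian bubble measure at $0$ is characterized (Chapter 5 of \cite{lawler2008conformally}) as the unique measure on hulls rooted at $0$ in $\H$ assigning mass $\frac{1}{r^2}$ to the set of loops meeting the circle $\{|z|=r\}$, and the events $\{\gamma\cap\{|z|=r\}=\emptyset\}$ generate the relevant $\sigma$-algebra, so it suffices to compute the \SLEkk\frac{8}{3}/-bubble mass of these events. Second, that computation uses the conformal restriction property of \SLEkk\frac{8}{3}/: mapping the curve from $0$ to $\epsilon$ back to the chordal curve from $0$ to $\infty$ via $F_\epsilon(z)=\frac{z}{\epsilon-z}$, the circle $|z|=r$ becomes a ball $B(c_0,\rho)$, the avoidance probability is $\phi_\epsilon'(0)^{5/8}$ for the map $\phi_\epsilon$ straightening $\H\setminus B(c_0,\rho)$, and the limit $\lim_{\epsilon\to 0}\epsilon^{-2}\bigl(1-\phi_\epsilon'(0)^{5/8}\bigr)=\frac{5}{8r^2}$ produces exactly the reciprocal constant $\frac{8}{5}$. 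You would need to supply both the uniqueness/characterization step and this restriction-property computation; neither appears in your proposal.
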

\begin{proof}
By the construction of the Brownian bubble measure at $0$ (see the Chapter 5 of Lawler's book \cite{lawler2008conformally}), it is the
unique measure on loops(or hulls enclosed by loops) in $\H$ rooted at $0$ such that  the  total mass  of the set of loops intersecting $\{z: |z|=r\}$ is
$\frac{1}{r^2}$ for any $r>0$.
Since
$$
\big\{\{\gamma:\gamma\cap \{|z|=r\}=\emptyset\}:r>0\big\}
$$
is an algebra that generate the $\sigma-$algebra of the space of loops,
we only need to show that the total mass of the \SLEkk\frac{8}{3}/-bubble
sample intersecting $|z|=r$ is $\frac{5}{8r^2}$.
Define $F_\epsilon(z)=\frac{z}{\epsilon-z}$; the image of the circle $|z|=r$ under $F_\epsilon$ is a circle
with center $c_0=-\frac{r^2}{r^2-\epsilon^2}$ and radius $\rho=\frac{\epsilon r}{r^2-\epsilon^2}$. Define the
conformal map $\phi_\epsilon(z)=z-c_0+\frac{\rho^2}{z-c_0}$ which maps $\H\backslash B(c_0,\rho)$ onto $\H$ with the derivative at $\infty$ equaling to $1$. By the conformal restriction property of \SLEkk\frac{8}{3}/, we have
$$
\mu^\epsilon[\gamma\cap |z|=r=\emptyset]=\mu^\infty[\gamma\cap B(c_0,\rho)=\emptyset]=\phi'_\epsilon(0)^{\frac{5}{8}}.
$$
Therefore we can check that
$$
\mu_{\text{SLE}(\kappa)}^{\text{bub}}(0)[\gamma\cap \{|z|=r\}\neq\emptyset]=\lim\limits_{\epsilon\rightarrow 0}\frac{1}{\epsilon^2}(1-\phi_\epsilon'(0))^{\frac{5}{8}}=\frac{5}{8r^2}.
$$
\end{proof}



\section{Proof of the main theorem}
In this section we will give a detailed proof of our main theorem.
\subsection{Proof of the nicer form}
Suppose we are given two points
$z_0=x_0+iy_0$ and $w_0=u_0+iv_0\in\H$. By the symmetry property of the the Brownian loop measure, we  may assume without loss of generality that $y_0\leq v_0$, $u_0\geq x_0$.  By \eqref{loop10},
\begin{align*}
\mu^{\text{loop}}_\H[E(z_0,w_0)]&=\frac{1}{\pi}\int_{\H}\mu_{\H_y}^{\text{bub}}(x+iy)[E(z_0,w_0)]dxdy\\[3mm]
&=\frac{1}{\pi}\int_0^{y_0}
\int_{\R}\mu_{\H_y}^{\text{bub}}(x+iy)[E(z_0,w_0)]dxdy.
\end{align*}
Here $E(z_0,w_0)$ denotes the event that the Brownian loop sample in $\H$ disconnects both $z_0$ and $w_0$ from the boundary of $\H$.

By the translation invariance of the Brownian bubble measure, we have
$$
\mu_{\H_y}^{\text{bub}}(x+iy)[E(z_0,w_0)]=\mu_{\H}^{\text{bub}}(0)[E(z_0-z,w_0-z)].
$$
By Lemma \ref{lemma5} we have $\mu_{\H}^{\text{bub}}(0)=\frac{8}{5}\mu_{\text{SLE}(\kappa)}^{\text{bub}}(0)$. Therefore by \eqref{characterization},
\begin{multline}\label{loop1}
\mu^{\text{loop}}_\H[E(z_0,w_0)]=\frac{8}{5\pi}\int_0^{y_0}\int_\R\mu_{\text{SLE}(\kappa)}^{\text{bub}}(0)[E(z_0-x-iy,w_0-x-iy)]dxdy\\
=\frac{8}{5\pi}\int_0^{y_0}\int_\R\frac{1}{4}\Im\Big(\frac{1}{z_0-x-iy}\Big)\Im\Big(\frac{1}{w_0-x-iy}\Big)\\
G(\sigma(z_0-x-iy,w_0-x-iy))dxdy.
\end{multline}

So  in order to prove the  theorem, we only need to compute above integral. Define two functions as follows:
\begin{equation}
f(x,y):=\frac{(y_0-y)(v_0-y)}{[(x_0-x)^2+(y_0-y)^2]\times[(u_0-x)^2+(v_0-y)^2]}.
\end{equation}
\begin{multline}
g(y):=\frac{(x_0-u_0)^2+(y_0-v_0)^2}{(x_0-u_0)^2+(y_0+v_0-2y)^2}\\
\,_2F_1(1,\frac{4}{3};\frac{5}{3};\frac{4(y_0-y)(v_0-y)}{(x_0-u_0)^2+(y_0+v_0-2y)^2}).
\end{multline}

\begin{lemma}
With the notations as above, for fixed $y>0$,
\begin{equation}
\int_\R f(x,y)dx=\frac{2(y_0-y)+v_0-y_0}{(x_0-u_0)^2+\big(2(y_0-y)+v_0-y_0\big)^2}\pi.
\end{equation}
\end{lemma}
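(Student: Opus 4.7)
The plan is to recognize this integral as a classical one involving two translated Lorentzian factors. Introduce the shorthand $a=y_0-y$, $b=v_0-y$ (both positive for $0<y<y_0\le v_0$), $p=x_0$, $q=u_0$. Then, using the identity $2(y_0-y)+v_0-y_0 = (y_0-y)+(v_0-y)=a+b$, the claim reduces to
$$
I \;:=\; \int_{\mathbb R}\frac{ab\,dx}{\bigl[(x-p)^2+a^2\bigr]\bigl[(x-q)^2+b^2\bigr]}
\;=\; \frac{(a+b)\pi}{(p-q)^2+(a+b)^2}.
$$

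The shortest conceptual route is probabilistic. Observe that $\frac{1}{\pi}\cdot\frac{a}{(x-p)^2+a^2}$ is the density of the Cauchy law with location $p$ and scale $a$, and similarly for the second factor. Thus $I/\pi^2$ is precisely the value at $0$ of the convolution of the Cauchy densities with parameters $(p,a)$ and $(-q,b)$. By the classical stability of the Cauchy family under convolution, this convolved density is the density of a Cauchy law with location $p-q$ and scale $a+b$; evaluating at $0$ and multiplying back by $\pi^2$ yields the displayed right-hand side.

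If one prefers a direct computation, residue calculus works just as well. The integrand is $O(|x|^{-4})$ at infinity and has simple poles in the upper half plane at $z_1=p+ia$ and $z_2=q+ib$. Closing the contour above and collecting residues produces
$$
I \;=\; \pi\!\left[\frac{b}{(p-q+ia)^2+b^2} + \frac{a}{(p-q-ib)^2+a^2}\right].
$$
The useful algebraic fact is that both denominators share a common factor: writing $d=p-q$, one checks
$$
(d+ia)^2+b^2 = \bigl(d+i(a-b)\bigr)\bigl(d+i(a+b)\bigr),\qquad (d-ib)^2+a^2 = \bigl(d+i(a-b)\bigr)\bigl(d-i(a+b)\bigr).
$$
Putting the two fractions over the common denominator $(d+i(a-b))^2\,(d^2+(a+b)^2)$ and expanding, the numerator collapses to $(a+b)(d+i(a-b))^2$, the spurious factor cancels, and one obtains $I = \pi(a+b)/(d^2+(a+b)^2)$, as desired.

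The only non-automatic step is the cancellation in the residue version, which is pure bookkeeping; the Cauchy-convolution argument bypasses it entirely and is the presentation I would choose. Integrability, contour closure at infinity, and positivity of $a,b$ in the range $y\in(0,y_0)$ are all trivially justified, so no analytic subtlety arises.
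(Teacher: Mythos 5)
Your proposal is correct, and both of your routes (the Cauchy-convolution argument and the residue computation) do establish exactly the identity needed: with $a=y_0-y$, $b=v_0-y$, $d=x_0-u_0$, the integral equals $\pi(a+b)/(d^2+(a+b)^2)$, and $2(y_0-y)+v_0-y_0=a+b$ as you note; I checked the residue bookkeeping, including the factorizations $(d+ia)^2+b^2=(d+i(a-b))(d+i(a+b))$ and $(d-ib)^2+a^2=(d+i(a-b))(d-i(a+b))$ and the collapse of the numerator to $(a+b)(d+i(a-b))^2$, and it is sound (the poles are distinct as long as $z_0\neq w_0$, and the Cauchy argument does not even need that). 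The paper proceeds differently: it writes down an explicit closed-form antiderivative of the product of the two Lorentzian factors (a partial-fractions/CAS-style primitive involving two arctangents and a logarithm), evaluates it at $\pm\infty$ to get $\pi\frac{b(b^2+c^2-a^2)+a(a^2+c^2-b^2)}{a^4-2a^2(b^2-c^2)+(b^2+c^2)^2}$, and then substitutes $b=a+d$ and simplifies to $(2a+d)\pi/(c^2+(2a+d)^2)$. Your approach buys a shorter and more conceptual derivation — the stability of the Cauchy family under convolution explains structurally why the answer is again a Cauchy kernel with location $x_0-u_0$ and scale $(y_0-y)+(v_0-y)$, which is also why the subsequent $y$-integrations in the paper come out cleanly — whereas the paper's computation is entirely elementary calculus at the cost of a heavier antiderivative and a final algebraic simplification (the denominator there factors as $((a-b)^2+c^2)((a+b)^2+c^2)$, matching your common-factor cancellation).
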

\begin{proof}
For fixed $y>0$,  denote
$$
a=y_0-y,\,\,\,\,b=v_0-y,\,\,\,\,c=u_0-x_0,\,\,\,\,d=v_0-y_0.
$$
Then we have
$$
f(x,y)=\frac{ab}{[(x_0-x)^2+a^2][(u_0-x)^2+b^2]}.
$$
By standard calculus,
\begin{align*}
&\int_\R f(x,y)dx=\int_\R \frac{ab}{[(x_0-x)^2+a^2][(u_0-x)^2+b^2]}dx\\
=&ab\int_\R \frac{1}{[x^2+a^2][(x+c)^2+b^2]}dx\\
=&\frac{ab\pi}{ab\big(a^4-2a^2(b^2-c^2)+(b^2+c^2)^2\big)}\Big(b(b^2+c^2-a^2)\arctan[\frac{x}{a}]\\
&+a\big[(a^2+c^2-b^2)\arctan[\frac{c+x}{b}]+bc\log\frac{b^2+(c+x)^2}{a^2+x^2}\big]\Big)|_{-\infty}^{\infty}\\
=&\pi\frac{b(b^2+c^2-a^2)+a(a^2+c^2-b^2)}{a^4-2a^2(b^2-c^2)+(b^2+c^2)^2}.\\
\end{align*}
Replace $b$ by $a+d$, we  get
$$
\int_\R f(x,y)dx=\frac{(2a+d)\pi}{c^2+(2a+d)^2},
$$
which is what we want.
\end{proof}

By \eqref{loop1}, we have
\begin{align}\label{loop44}
&\mu^{\text{loop}}_\H[E(z_0,w_0)]=\frac{8}{5\pi}\int_0^{y_0}\int_\R\frac{1}{4}f(x,y)(1-g(y))dxdy\nonumber\\[3mm]
=&\frac{8}{5\pi}\int_{0}^{y_0} \frac{\pi}{4}\frac{2a+d}{c^2+(2a+d)^2}[1-g(y)]dy
=\frac{2}{5}(A-B),
\end{align}
where
\begin{equation}\label{loop2}
A=A(z_0,w_0)=\int_0^{y_0}\frac{2(y_0-y)+v_0-y_0}{(x_0-u_0)^2+\big(2(y_0-y)+v_0-y_0\big)^2}dy,
\end{equation}
and
\begin{equation}\label{loop3}
B=B(z_0,w_0)=\int_0^{y_0}\frac{2(y_0-y)+v_0-y_0}{(x_0-u_0)^2+\big(2(y_0-y)+v_0-y_0\big)^2}g(y)dy.
\end{equation}
\begin{lemma}\label{lemma1}
$$
A=\frac{1}{4}\log\frac{1}{\sigma},
$$
where $\sigma$ is defined as \eqref{sigma}.
\end{lemma}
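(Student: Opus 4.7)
The plan is to reduce $A$ to an elementary logarithmic integral by a linear change of variables, and then to recognize the resulting expression as the cross-ratio $\sigma$.

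First I would introduce the abbreviations $c = x_0 - u_0$ and $s = 2(y_0 - y) + v_0 - y_0 = y_0 + v_0 - 2y$, so that $ds = -2\,dy$. Under this substitution the numerator in the integrand of \eqref{loop2} is exactly $s$, the denominator is $c^2 + s^2$, and the endpoints $y = 0$ and $y = y_0$ correspond respectively to $s = y_0 + v_0$ and $s = v_0 - y_0$. Consequently
\begin{equation*}
A = \frac{1}{2}\int_{v_0-y_0}^{y_0+v_0}\frac{s\,ds}{c^2+s^2}.
\end{equation*}

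Next I would evaluate this integral using the antiderivative $\frac{1}{2}\log(c^2+s^2)$, which yields
\begin{equation*}
A = \frac{1}{4}\log\frac{c^2+(y_0+v_0)^2}{c^2+(v_0-y_0)^2} = \frac{1}{4}\log\frac{(x_0-u_0)^2+(y_0+v_0)^2}{(x_0-u_0)^2+(y_0-v_0)^2}.
\end{equation*}

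Finally I would observe that the numerator equals $|z_0 - \bar w_0|^2$ and the denominator equals $|z_0 - w_0|^2$, so by the definition \eqref{sigma} of $\sigma$ the argument of the logarithm is exactly $1/\sigma$, giving $A = \frac{1}{4}\log\frac{1}{\sigma}$. There is no real obstacle here; the entire lemma is an elementary one-line substitution, and the only thing to check carefully is that the algebraic identification of the final ratio with $1/\sigma$ uses precisely the form of $\sigma$ recorded in \eqref{sigma}.
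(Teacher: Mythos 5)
Your proof is correct and is essentially the paper's own argument: both evaluate \eqref{loop2} by a linear change of variables reducing it to $\int s\,ds/(c^2+s^2)$ and then identify the resulting ratio with $1/\sigma$ via \eqref{sigma}. The only cosmetic difference is that you use the single substitution $s=y_0+v_0-2y$ where the paper chains two substitutions ($y\mapsto y_0-y$, then $y\mapsto(2y+d)/c$), which changes nothing of substance.
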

\begin{proof}
By \eqref{loop2} we have
\begin{align*}
A=&\int_0^{y_0}\frac{2(y_0-y)+d}{c^2+(2(y_0-y)+d)^2}dy
=\int_0^{y_0}\frac{2y+d}{c^2+(2y+d)^2}dy\\[3mm]
=&\frac{1}{2}\int_{d/c}^{\frac{2y_0+d}{c}}\frac{y}{1+y^2}dy=\frac{1}{4}\log\frac{c^2+(2y_0+d)^2}{c^2+d^2}.
\end{align*}
In the second equation we have  used the  change of variable $y\rightarrow y_0-y$ and in the last equation the
change of variable $y\rightarrow \frac{2y+d}{c}$. Notice that
\begin{align*}
\frac{c^2+(2y_0+d)^2}{c^2+d^2}=\frac{(u_0-x_0)^2+(y_0+v_0)^2}{(u_0-x_0)^2+(v_0-y_0)^2}=\frac{1}{\sigma}.
\end{align*}
\end{proof}
\begin{lemma}\label{lemma2}
$$
B=\frac{1}{4}(1-\sigma)\,_3F_2(1,\frac{4}{3},1;\frac{5}{3},2;1-\sigma),
$$
where $\sigma$ is defined as \eqref{sigma}.
\end{lemma}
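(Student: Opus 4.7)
The plan is to reduce the integral defining $B$ to the primitive of $\,_2F_1(1,4/3;5/3;\,\cdot\,)$ by a well-chosen change of variables, and then identify that primitive with $\,_3F_2$ via the series expansion.

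First, mimicking the proof of Lemma \ref{lemma1}, I would change variables $y \mapsto y_0 - y$ so that the rational prefactor becomes $(2y+d)/(c^2+(2y+d)^2)$ with $c=u_0-x_0$ and $d=v_0-y_0$, and at the same time the arguments appearing inside $g$ transform as $y_0-y\mapsto y$, $v_0-y\mapsto y+d$, and $y_0+v_0-2y\mapsto 2y+d$. Next I would set $u=2y+d$; then $(y_0-y)(v_0-y)$ in the $_2F_1$ argument becomes $(u^2-d^2)/4$, and the quotient $(c^2+d^2)/(c^2+u^2)$ that is the coefficient in front of the hypergeometric is retained.

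The key move is the substitution $t = (u^2-d^2)/(c^2+u^2)$. A direct computation yields $1-t = (c^2+d^2)/(c^2+u^2)$ and $u\,du/(c^2+u^2) = dt/(2(1-t))$. Once this is plugged in, the factor $(c^2+d^2)/(c^2+u^2) = 1-t$ coming from $g$ cancels the $(1-t)^{-1}$ produced by the change of variables, and the whole integrand collapses to
\begin{equation*}
B = \frac{1}{4}\int_{0}^{1-\sigma} \,_2F_1\!\left(1,\tfrac{4}{3};\tfrac{5}{3};t\right)\,dt.
\end{equation*}
The lower limit $t=0$ is immediate from $u=d$; for the upper limit, $u=2y_0+d=y_0+v_0$ gives $t = 4y_0v_0/[(u_0-x_0)^2+(y_0+v_0)^2]$, and a direct algebraic check identifies this with $1-\sigma$ using the definition \eqref{sigma}.

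Finally, I would expand the $_2F_1$ as a power series, integrate term by term, and match coefficients with the series of $s\cdot\,_3F_2(1,\tfrac{4}{3},1;\tfrac{5}{3},2;s)$, using $(1)_n/(2)_n = 1/(n+1)$ to absorb the extra Pochhammer factors. This shows $\int_0^s \,_2F_1(1,\tfrac{4}{3};\tfrac{5}{3};t)\,dt = s\,\,_3F_2(1,\tfrac{4}{3},1;\tfrac{5}{3},2;s)$, which, evaluated at $s=1-\sigma$, yields the claimed identity. The only step that is not purely bookkeeping is spotting the substitution $t=(u^2-d^2)/(c^2+u^2)$, which is natural once one notices that the prefactor of the $_2F_1$ inside $g$ is exactly $1-t$; everything downstream is routine.
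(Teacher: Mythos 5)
Your proof is correct and follows essentially the same route as the paper: the same shift $y\mapsto y_0-y$, the substitution $t=(u^2-d^2)/(c^2+u^2)$ (which the paper performs in two steps, first rescaling by $c$ and then setting $\frac{c^2y^2-d^2}{c^2+c^2y^2}\to y$), and the identity $\int_0^s\,_2F_1(1,\tfrac{4}{3};\tfrac{5}{3};t)\,dt=s\,_3F_2(1,\tfrac{4}{3},1;\tfrac{5}{3},2;s)$, which the paper cites directly while you verify it by term-by-term integration. No gaps.
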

\begin{proof}
By \eqref{loop3}  and the definition of $g(y)$, we have
\begin{align*}
B=&\int_0^{y_0}\frac{2(y_0-y)+d}{c^2+(2(y_0-y)+d)^2}\cdot\frac{c^2+d^2}{c^2+(2(y_0-y)+d)^2}\cdot\\[5mm]
&\,_2F_1(1,\frac{4}{3};\frac{5}{3};\frac{4(y_0-y)((y_0-y)+d)}{c^2+(2(y_0-y)+d)^2})dy\\[5mm]
=&\int_0^{y_0}\frac{2y+d}{c^2+(2y+d)^2}\cdot\frac{c^2+d^2}{c^2+(2y+d)^2}\cdot\,_2F_1(1,\frac{4}{3};\frac{5}{3};\frac{4y(y+d)}{c^2+(2y+d)^2})dy\\[5mm]
=&\int_{\frac{d}{c}}^{\frac{2y_0+d}{c}}\frac{cy}{c^2+c^2y^2}\cdot\frac{c^2+d^2}{c^2+c^2y^2}\cdot\,_2F_1(1,\frac{4}{3};\frac{5}{3};\frac{c^2y^2-d^2}{c^2+c^2y^2})\cdot\frac{c}{2}dy\\[5mm]
=&\frac{1}{2}\frac{c^2+d^2}{c^2}\int_{\frac{d}{c}}^{\frac{2y_0+d}{c}}\frac{y}{(1+y^2)^2}\cdot\,_2F_1(1,\frac{4}{3};\frac{5}{3};\frac{c^2y^2-d^2}{c^2+c^2y^2})dy\\[5mm]
=&\frac{1}{4}\int_0^{\frac{4y_0(y_0+d)}{c^2+(2y_0+d)^2}}\,\,_2F_1(1,\frac{4}{3};\frac{5}{3};y)dy\\[5mm]
=&\frac{1}{4}\frac{4y_0(y_0+d)}{c^2+(2y_0+d)^2}\cdot \,_3F_2(1,\frac{4}{3},1;\frac{5}{3},2;\frac{4y_0(y_0+d)}{c^2+(2y_0+d)^2})\\[5mm]
=&\frac{1}{4}(1-\sigma)\,_3F_2(1,\frac{4}{3},1;\frac{5}{3},2;1-\sigma).
\end{align*}
Here the second equation used the change of variable $y\rightarrow y_0-y$, the third equation used the change of variable $y\rightarrow \frac{2y+d}{c}$, the fifth equation used the change of variable
 $\frac{c^2y^2-d^2}{c^2+c^2y^2}\rightarrow y$ and the sixth equation used the equation about hypergeometric functions below:
$$
\int_0^x\,_2F_1(a,b;c,y)dy =x\,_3F_2(a,b,1;c,2,x).
$$
\end{proof}
Now by \eqref{loop44} and Lemma \ref{lemma1} and Lemma \ref{lemma2} we get \eqref{han}.

\subsection{Proof of  Cardy-Gamsa's formula }
In this section we  will prove the  equivalence of formula \eqref{cardy} and \eqref{han}. First we will recall some identities for the hypergeometric functions which will be used in our proof. We will assume that our hypergeometric functions
are all well defined. And they satisfies the following identities (see Chapter 8 of \cite{beals2010special}):
\begin{equation}\label{loop4}
\,_2F_1(a,b;c;x)=(1-x)^{-b} \HY(c-a,b;c;\frac{x}{x-1}).
\end{equation}
\begin{multline}\label{loop5}
\HY(a,b;c;x)=\frac{\Gamma(c)\Gamma(c-a-b)}{\Gamma(c-a)\Gamma(c-b)}\HY(a,b;a+b+1-c;1-x)\\
+\frac{\Gamma(c)\Gamma(a+b-c)}{\Gamma(a)\Gamma(b)}(1-x)^{c-a-b}\HY(c-a,c-b;c+1-a-b;1-x).
\end{multline}
\begin{equation}\label{loop6}
\,_2F_1(a,b;c;x)=(1-x)^{c-a-b} \HY(c-a,c-b;c;x).
\end{equation}
Notice that $\eta=\frac{\sigma}{\sigma-1}$ and $\sigma\in (0,1)$. We define a function $\phi$ on $[0,1]$ as follows:
\begin{multline}\label{function1}
\phi(t)=\frac{2\pi}{\sqrt{3}}+\frac{t}{t-1}\hy(1,\frac{4}{3},1;\frac{5}{3},2;\frac{t}{t-1})-(1-t)\hy(1,\frac{4}{3},1;\frac{5}{3},2;1-t)\\[3mm]
-2\log (1-t)-2\frac{\Gamma(\frac{2}{3})^2}{\Gamma(\frac{4}{3})}\sqrt[3]{\frac{t}{(t-1)^2}}\HY(1,\frac{2}{3};\frac{4}{3};\frac{t}{t-1}).
\end{multline}

To prove that \eqref{cardy} and \eqref{han} are equivalent,  we only need to show that $\phi(t)\equiv 0$. Notice
that(see Appendix) $\phi(0)=\frac{2\pi}{\sqrt{3}}-\hy(1,\frac{4}{3},1;\frac{5}{3},2;1)=0$, it is left  to show that $\phi'(t)\equiv 0$.
Let us take the following notations.
\begin{align*}
 &I(t):=\frac{t}{t-1}\hy(1,\frac{4}{3},1;\frac{5}{3},2;\frac{t}{t-1})-2\log (1-t),\\[3mm]
 &J(t):=-(1-t)\hy(1,\frac{4}{3},1;\frac{5}{3},2;1-t),\\[3mm]
 &K(t):=-2\frac{\Gamma(\frac{2}{3})^2}{\Gamma(\frac{4}{3})}\sqrt[3]{\frac{t}{(t-1)^2}}\HY(1,\frac{2}{3};\frac{4}{3};\frac{t}{t-1}).
\end{align*}
Define  $f(x)=x\hy(1,\frac{4}{3},1;\frac{5}{3},2;x)$.  It is easy to check that
$$
f'(x)=\HY(1,\frac{4}{3};\frac{5}{3};x).
$$
So
\begin{align}\label{loop7}
\frac{dI(t)}{dt}&=\frac{2}{1-t}+f'(\frac{t}{t-1})\frac{-1}{(1-t)^2}=\frac{2}{1-t}-\frac{1}{(1-t)^2}\HY(1,\frac{4}{3};\frac{5}{3};\frac{t}{t-1})\nonumber\\[3mm]
&=\frac{2}{1-t}-\frac{1}{1-t}\HY(1,\frac{1}{3};\frac{5}{3};t).
\end{align}
The last equation follows from \eqref{loop4} by assigning $a=\frac{1}{3}, b=1, c=\frac{5}{3}$. Similarly we  get
$$
\frac{dJ(t)}{dt}=f'(1-t)=\HY(1,\frac{4}{3};\frac{5}{3};1-t).
$$
Using \eqref{loop5} with $a=1, b=\frac{4}{3}, c=\frac{5}{3}$, we have
$$
\HY(1,\frac{4}{3};\frac{5}{3};1-t)=-\HY(1,\frac{4}{3};\frac{5}{3};t)+\frac{2}{3}\frac{\Gamma(\frac{2}{3})^2}{\Gamma(\frac{4}{3})}
t^{-\frac{2}{3}}\HY(\frac{1}{3},\frac{2}{3};\frac{1}{3};t)
$$
 By letting $a=\frac{1}{3}, b=\frac{2}{3}, c=\frac{1}{3}$ in \eqref{loop6}, the following holds
 $$
 \HY(\frac{1}{3},\frac{2}{3};\frac{1}{3};t)=(1-t)^{-\frac{2}{3}}\HY(0,-\frac{1}{3};\frac{1}{3},x)=(1-t)^{-\frac{2}{3}}.
 $$
 Therefore
 \begin{equation}\label{loop8}
 \frac{dJ(t)}{dt}=-\HY(1,\frac{4}{3};\frac{5}{3};t)+\frac{2}{3}\frac{\Gamma(\frac{2}{3})^2}{\Gamma(\frac{4}{3})}
(t(1-t))^{-\frac{2}{3}}.
 \end{equation}
Lastly we deal with the derivative of $K(t)$ with respect to $t$. By letting $a=\frac{1}{3}, b=\frac{2}{3}, c=\frac{4}{3}$ in \eqref{loop4}, we  get
$$
\HY(1,\frac{2}{3};\frac{4}{3};\frac{t}{t-1})=(1-t)^{\frac{2}{3}}\HY(1,\frac{2}{3};\frac{4}{3};t).
$$
Consequently,
$$
K(t)=-2\frac{\Gamma(\frac{2}{3})^2}{\Gamma(\frac{4}{3})}t^{\frac{1}{3}}\HY(1,\frac{2}{3};\frac{4}{3};t).
$$
And
\begin{align}\label{loop9}
\frac{dK(t)}{dt}&=-2\frac{\Gamma(\frac{2}{3})^2}{\Gamma(\frac{4}{3})}\big[\frac{1}{3}t^{-\frac{2}{3}}\HY(1,\frac{2}{3};\frac{4}{3};t)+t^{\frac{1}{3}}\frac{(1-t)^{-\frac{2}{3}}-\HY(1,\frac{2}{3};\frac{4}{3};t)}{3t}\big]\nonumber\\[3mm]
&=-\frac{2}{3}\frac{\Gamma(\frac{2}{3})^2}{\Gamma(\frac{4}{3})}t^{-\frac{2}{3}}(1-t)^{-\frac{2}{3}}.
\end{align}
Combining \eqref{loop7},\eqref{loop8} and \eqref{loop9}, we have
$$
\phi'(t)=\frac{dI(t)}{dt}+\frac{dJ(t)}{dt}+\frac{dK(t)}{t}=\frac{2}{1-t}-\frac{1}{1-t}\HY(1,\frac{1}{3};\frac{5}{3};t)-\HY(1,\frac{4}{3};\frac{5}{3};t).
$$
\begin{lemma}\label{lemma3}
$$
2-\HY(1,\frac{1}{3};\frac{5}{3};t)-(1-t)\HY(1,\frac{4}{3};\frac{5}{3};t)=0.
$$
\end{lemma}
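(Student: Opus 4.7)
The plan is to prove the identity by matching Taylor coefficients at $t=0$. Set
\[
F_1(t):=\HY(1,\tfrac{1}{3};\tfrac{5}{3};t)=\sum_{n\ge 0}a_n\, t^n,\qquad F_2(t):=\HY(1,\tfrac{4}{3};\tfrac{5}{3};t)=\sum_{n\ge 0}b_n\, t^n,
\]
where $a_n=(1/3)_n/(5/3)_n$ and $b_n=(4/3)_n/(5/3)_n$. The assertion $2-F_1(t)-(1-t)F_2(t)\equiv 0$ is equivalent to the constant identity $a_0+b_0=2$, which is trivial since $a_0=b_0=1$, together with the family of coefficient identities $a_n+b_n-b_{n-1}=0$ for every $n\ge 1$.

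For the coefficient at $t^n$ with $n\ge 1$ I would manipulate Pochhammer symbols directly. The recursion $(\alpha)_n=(\alpha+n-1)(\alpha)_{n-1}$ gives $b_{n-1}=b_n\cdot(n+2/3)/(n+1/3)$, so that $b_{n-1}-b_n=b_n/(3n+1)$. The identity to be proved therefore collapses to $(3n+1)\,a_n=b_n$, i.e.\ to $(3n+1)(1/3)_n=(4/3)_n$. This last relation is immediate from the telescoping product representations
\[
(1/3)_n=\prod_{k=0}^{n-1}\frac{3k+1}{3},\qquad (4/3)_n=\prod_{k=0}^{n-1}\frac{3k+4}{3},
\]
whose ratio equals $3n+1$.

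There is no essential obstacle: the entire argument is a short exercise in bookkeeping with Pochhammer symbols, and the one non-obvious step — recognising that $(4/3)_n$ is just $(3n+1)$ times $(1/3)_n$ — is forced by the specific choice of parameters in the statement. A more conceptual alternative, requiring one extra ingredient beyond what the paper already has, would be to apply Euler's transformation \eqref{loop6} to both terms, rewriting the claim as $\HY(2/3,4/3;5/3;t)+\HY(2/3,1/3;5/3;t)=2(1-t)^{-1/3}$, and then to invoke Gauss's contiguous relation $b\,\HY(a,b+1;c;t)-a\,\HY(a+1,b;c;t)=(b-a)\HY(a,b;c;t)$ with $(a,b,c)=(2/3,1/3,5/3)$, together with the evaluation $\HY(5/3,1/3;5/3;t)=(1-t)^{-1/3}$. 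I would nevertheless present the direct coefficient-matching version, since it uses nothing beyond the series definition of $\HY$.
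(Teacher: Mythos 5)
Your proof is correct: expanding both hypergeometric series and reducing the claim to the termwise identity $a_n+b_n=b_{n-1}$, i.e.\ $(3n+1)(1/3)_n=(4/3)_n$, is exactly the paper's argument, which writes the same coefficient cancellation with Gamma functions and the relation $\Gamma(x+1)=x\Gamma(x)$ instead of Pochhammer symbols. No substantive difference, and your bookkeeping checks out.
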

\begin{proof}
By definition we have
$$
\HY(1,\frac{4}{3};\frac{5}{3};t)=1+\sum\limits_{n=1}^{\infty}\frac{\Gamma(n+\frac{4}{3})\Gamma(\frac{5}{3})}{\Gamma(\frac{4}{3})\Gamma(n+\frac{5}{3})}t^n.
$$
Therefore
$$
t\HY(1,\frac{4}{3};\frac{5}{3};t)=\sum\limits_{n=1}^{\infty}\frac{\Gamma(n+\frac{1}{3})\Gamma(\frac{5}{3})}{\Gamma(\frac{4}{3})\Gamma(n+\frac{2}{3})}t^n.
$$
Similarly
$$
\HY(1,\frac{1}{3};\frac{5}{3};t)=1+\sum\limits_{n=1}^{\infty}\frac{\Gamma(n+\frac{1}{3})\Gamma(\frac{5}{3})}{\Gamma(\frac{1}{3})\Gamma(n+\frac{5}{3})}t^n.
$$
By using the relation $\Gamma(x+1)=x\Gamma(x)$, we can see that the coefficient of $t^n$ in the sum is
$$
\frac{\Gamma(n+\frac{1}{3})\Gamma(\frac{5}{3})}{\Gamma(\frac{4}{3})\Gamma(n+\frac{2}{3})}-\frac{\Gamma(n+\frac{4}{3})\Gamma(\frac{5}{3})}{\Gamma(\frac{4}{3})\Gamma(n+\frac{5}{3})}-\frac{\Gamma(n+\frac{1}{3})\Gamma(\frac{5}{3})}{\Gamma(\frac{1}{3})\Gamma(n+\frac{5}{3})}=0.
$$
\end{proof}
From Lemma \ref{lemma3}, we have $\phi'(t)\equiv 0$, and therefore $\phi\equiv \phi(0)=0$. This completes the proof
of the equivalence between \eqref{cardy} and \eqref{han}.
\section{The other cases}
Given $z,w\in \H$, and $\gamma$ the
sample of the Brownian loop in the upper half plane. According to the property of Brownian path, almost surely, $z,w\not\in\gamma$. So except the case  that $\gamma$ disconnects both $z$ and $w$ from the boundary,  there are three other cases:

(1) $\gamma$ disconnects $z$ from the boundary but does not disconnect $w$ from the boundary;

(2) $\gamma$ disconnects $w$ from the boundary but does not disconnect $z$ from the boundary;

(3) $\gamma$  does neither disconnects $z$ from the boundary nor  disconnects $w$ from the boundary.

We will show that the total measure of above three cases are infinite. In fact, using the same method
as \cite{beliaev2013some},  we can show the following lemma.
\begin{lemma}
Suppose that $\gamma$ is the sample of the \SLEkk\frac{8}{3}/ from $0$ to $\epsilon$ and denote above three cases by
$E_1(z,w), E_2(z,w)$ and $E_3(z,w)$ respectively. Then
\begin{equation}
\prob[E_1(z,w)]=\frac{1}{4}\epsilon^2\big((\Im\frac{1}{z})^2-\Im\frac{1}{z}\Im\frac{1}{w} G(\sigma)\big)+O(\epsilon^3),
\end{equation}
\begin{equation}
\prob[E_2(z,w)]=\frac{1}{4}\epsilon^2\big[(\Im\frac{1}{w})^2-\Im\frac{1}{z}\Im\frac{1}{w} G(\sigma)\big]+O(\epsilon^3),
\end{equation}
\begin{equation}
\prob[E_3(z,w)]=1-\frac{1}{4}\epsilon^2\big[(\Im\frac{1}{w})^2+(\Im\frac{1}{z})^2-\Im\frac{1}{z}\Im\frac{1}{w} G(\sigma)\big]+O(\epsilon^3).
\end{equation}
\end{lemma}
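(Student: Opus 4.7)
The plan is to reduce the three probabilities in the lemma to the two quantities that have already been computed in the paper, namely the mass of the SLE($8/3$)-bubble that disconnects a single point $z$ from $\infty$ (obtained in the proof of the bubble-existence lemma from \eqref{loop12} and yielding $\tfrac{1}{4}(\Im\tfrac{1}{z})^2$) and the joint-disconnection mass from Lemma \ref{lemma6}. Introduce the single event $D(z)=\{\gamma \text{ disconnects } z \text{ from } \infty\}$ and note that, up to the $\prob$-null event that $\gamma$ passes through $z$ or $w$, the sample space decomposes as the disjoint union $E(z,w)\sqcup E_1(z,w)\sqcup E_2(z,w)\sqcup E_3(z,w)$ with $E(z,w)=D(z)\cap D(w)$, $E_1=D(z)\cap D(w)^c$, $E_2=D(z)^c\cap D(w)$, $E_3=D(z)^c\cap D(w)^c$.

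Next I would record the two input expansions. For $D(z)$, write $\prob[D(z)]=p(F_\epsilon^{-1}(z))$ with the explicit integral formula from Lemma \ref{lemma4}, evaluate the antiderivative (for $\kappa=8/3$ the integrand is $(1+t^2)^{-3/2}$, with primitive $t/\sqrt{1+t^2}$, and normalizing constant $C=1/2$), and Taylor-expand the resulting expression in $\epsilon$. A short computation shows $\mathrm{Re}(F_\epsilon^{-1}(z))/\Im(F_\epsilon^{-1}(z))=u/v-(u^2+v^2)/(\epsilon v)$ and hence
\[
\prob[D(z)]=\tfrac{1}{4}(\Im\tfrac{1}{z})^2\epsilon^2+O(\epsilon^3),
\]
and identically for $D(w)$. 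The second input is Lemma \ref{lemma6} itself: $\prob[E(z,w)]=\tfrac{1}{4}\Im(\tfrac{1}{z})\Im(\tfrac{1}{w})G(\sigma)\epsilon^2+O(\epsilon^3)$.

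The final step is pure inclusion-exclusion. Since $E_1=D(z)\setminus E(z,w)$ and $D(z)\supset E(z,w)$, we get $\prob[E_1]=\prob[D(z)]-\prob[E(z,w)]$, which on substitution yields the first formula. Exchanging the roles of $z$ and $w$ gives $\prob[E_2]=\prob[D(w)]-\prob[E(z,w)]$, hence the second formula. Finally $E_3$ is the complement of $D(z)\cup D(w)$, so
\[
\prob[E_3]=1-\prob[D(z)]-\prob[D(w)]+\prob[E(z,w)],
\]
which produces the third formula after substitution.

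The only place where anything nontrivial has to be checked is the error term in the single-point expansion: one needs $O(\epsilon^3)$, not merely $o(\epsilon^2)$. This will be the main (and really only) obstacle, but it is purely a one-variable Taylor-expansion exercise, obtained by expanding $a/\sqrt{1+a^2}$ at $a\to-\infty$ to next order in $1/a$ and using $1/a=-\epsilon v/(u^2+v^2)+O(\epsilon^2)$; the error is uniform on compact subsets of $\H$, matching the uniformity already implicit in the corresponding expansion of Lemma \ref{lemma6} proved in \cite{beliaev2013some}.
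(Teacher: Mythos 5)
Your proposal is correct, and it reaches the lemma by a genuinely different and more economical route than the paper. The paper follows Beliaev--Viklund directly: it first asserts the full set of two-point passage formulas for the chordal $\mathrm{SLE}(8/3)$ from $0$ to $\infty$ (the probabilities of passing left of $z$ and right of $w$, etc., each expressed through $G(\sigma)$), and only then transfers these to the curve from $0$ to $\epsilon$ via $F_\epsilon(z)=\frac{\epsilon z}{1+z}$ and expands in $\epsilon$. You instead observe that the four events $E(z,w),E_1,E_2,E_3$ partition the sample space as $D(z)\cap D(w)$, $D(z)\setminus D(w)$, $D(w)\setminus D(z)$, $(D(z)\cup D(w))^c$, so that inclusion--exclusion reduces everything to two inputs already present in the paper: Lemma \ref{lemma6} for $\prob[E(z,w)]$, and the single-point expansion $\prob[D(z)]=p(F_\epsilon^{-1}(z))=\frac{1}{4}(\Im\frac{1}{z})^2\epsilon^2+O(\epsilon^3)$, which you correctly sharpen from the $o(\epsilon^2)$ statement implicit in \eqref{loop12} by the elementary expansion of $\frac{1}{2}\bigl(1+a(1+a^2)^{-1/2}\bigr)=\frac{1}{4a^2}+O(a^{-4})$ with $a=\Re F_\epsilon^{-1}(z)/\Im F_\epsilon^{-1}(z)\to-\infty$ of order $1/\epsilon$ (your computations of $C=1/2$ and of $a$ check out, and the $O(\epsilon^3)$ error is exactly what is needed to match the error in Lemma \ref{lemma6}). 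What your argument buys is that one never needs the four separate left/right two-point passage formulas, whose derivation the paper only sketches by appeal to \cite{beliaev2013some}; what the paper's route buys is precisely that finer information (all four passage combinations for the chordal curve), which has independent interest but is not required for the stated asymptotics. Your proof is complete as a derivation of the lemma and also suffices for the subsequent computation of $\mu_{\H}^{\text{bub}}(0)(E_i)$ by the same $\epsilon\to 0$ limit.
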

The proof of this lemma is the same as in \cite{beliaev2013some}. We only need to prove that for \SLEkk\frac{8}{3}/  $\gamma$ from $0$
 to $\infty$, the following holds.
 $$
 \begin{array}{rl}
 \prob[\gamma&\text{passes the left of }z\text{ and the right of }w]\\[3mm]
 =&\frac{1}{4}(1-\frac{x}{|z|})(1+\frac{u}{|w|})(1-\frac{y}{|z|-x}\frac{v}{|w|+u}G(\sigma)).
 \end{array}
 $$
  $$
 \begin{array}{rl}
 \prob[\gamma&\text{passes the left of }w\text{ and the right of }z]\\[3mm]
 =&\frac{1}{4}(1+\frac{x}{|z|})(1-\frac{u}{|w|})(1-\frac{y}{|z|+x}\frac{v}{|w|-u}G(\sigma)).
 \end{array}
 $$
  $$
 \begin{array}{rl}
 \prob[\gamma&\text{passes the right of both }z\text{ and } w]\\[3mm]
 =&\frac{1}{4}(1-\frac{x}{|z|})(1-\frac{u}{|w|})(1+\frac{y}{|z|-x}\frac{v}{|w|-u}G(\sigma)).
 \end{array}
 $$
 where $G(\sigma)$ is the same as \eqref{loop11}.
 Then using the conformal map $F_\epsilon(z)=\frac{\epsilon z}{1+z}$ to convert the  \SLEkk\frac{8}{3}/ from $0$ to $\infty$ into the  \SLEkk\frac{8}{3}/ from $0$ to $\epsilon$. Combining the  above lemma and  the definition of the Brownian bubble measure and lemma \ref{lemma5},
 we can get
 $$
 \mu_{\H}^{\text{bub}}(0)(E_1(z,w))=\frac{1}{10}[(\frac{y}{x^2+y^2})^2-\frac{y}{x^2+y^2}\frac{v}{u^2+v^2}G(\sigma(z,w))].
 $$
 $$
 \mu_{\H}^{\text{bub}}(0)(E_2(z,w))=\frac{1}{10}[(\frac{v}{u^2+v^2})^2-\frac{y}{x^2+y^2}\frac{v}{u^2+v^2}G(\sigma(z,w))].
 $$
 $$
 \mu_{\H}^{\text{bub}}(0)(E_3(z,w))=\infty.
 $$
By relation \eqref{loop10} and calculating the integral on the upper half plane, we  see that the total mass of the Brownian loop measure of  these three sets are infinite. In fact, we can see intuitively that these three cases all contain the  loops with  arbitrary small diameter, while the event $E(z,w)$ in the main theorem exclude these small loops.
\section{Appendix: on the value of $ \hy(1,\frac{4}{3},1;2,\frac{5}{3};1)$ }
We want to prove that
\begin{align}\label{quzhi}
    \hy(1,\frac{4}{3},1;2,\frac{5}{3};1)=\frac{2\pi}{\sqrt{3}}.
\end{align}
\begin{lemma}\label{lemma7}
For $b>a>0$ we have
\begin{align}\label{quzhi2}
    \hy(1,1,a;2,b;1)=\frac{b-1}{a-1}(\psi(b-1)-\psi(b-a)).
\end{align}
where
\begin{align}
    \psi(x)=:\frac{\Gamma'(x)}{\Gamma(x)}
\end{align}
is the digamma function and $\Gamma(x)$ is the $\Gamma$-function.
\end{lemma}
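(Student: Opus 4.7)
The strategy is to collapse the ${}_3F_2$ evaluated at $1$ into a single integral that is recognizable as a parameter derivative of a Beta function. First, I would expand the series and use $(1)_n/(2)_n = 1/(n+1)$ to reduce the claim to evaluating
\[
\sum_{n=0}^{\infty} \frac{(a)_n}{(n+1)\,(b)_n}.
\]

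Next I would substitute the Beta-integral representation
\[
\frac{(a)_n}{(b)_n} = \frac{\Gamma(b)}{\Gamma(a)\Gamma(b-a)} \int_0^1 t^{a+n-1}(1-t)^{b-a-1}\,dt,
\]
which holds because $b>a>0$, and interchange sum and integral (justified by Tonelli, since all terms are nonnegative on $[0,1]$). The inner sum is the familiar $\sum_{n\ge 0} t^n/(n+1) = -t^{-1}\log(1-t)$, so everything collapses to
\[
\hy(1,1,a;2,b;1) \;=\; -\frac{\Gamma(b)}{\Gamma(a)\Gamma(b-a)} \int_0^1 t^{a-2}(1-t)^{b-a-1}\log(1-t)\,dt.
\]

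The key recognition is that this integral is a parameter derivative of a Beta function: differentiating $B(a-1,s)=\int_0^1 t^{a-2}(1-t)^{s-1}\,dt$ in $s$ brings down exactly the $\log(1-t)$ factor, while the closed form $B(a-1,s)=\Gamma(a-1)\Gamma(s)/\Gamma(a+s-1)$ combined with $(\log\Gamma)'=\psi$ evaluates that derivative in terms of digamma values. Setting $s=b-a$, substituting back, and simplifying via $\Gamma(b)/\Gamma(b-1)=b-1$ and $\Gamma(a-1)/\Gamma(a)=1/(a-1)$ yields the claimed formula.

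The one thing that needs care is the range $0<a\le 1$, where $t^{a-2}$ is non-integrable at $0$ and $B(a-1,s)$ is no longer defined by an absolutely convergent integral. I would handle this either by first integrating by parts on $\log(1-t)$ to reduce the order of the singularity at $0$ before differentiating in $s$, or by extending the identity from $a>1$ by analytic continuation: both sides are meromorphic in $a$, and the apparent pole at $a=1$ on the right-hand side is cancelled by the zero of $\psi(b-1)-\psi(b-a)$ at $a=1$, producing a finite limit $(b-1)\psi'(b-1)$. This is the main technical obstacle; the rest is a sequence of routine manipulations.
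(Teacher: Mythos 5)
Your argument is correct, but it follows a genuinely different route from the paper. The paper invokes Thomae's transformation for $\hy$ at unit argument (Lemma \ref{Thomae}) to rewrite $\hy(1,1,a;2,b;1)$ as $(b-1)\sum_{n\ge 0}\frac{1}{(n+b-1)(n+b-a)}$ and then matches this with the series representation of the digamma function; you instead bypass Thomae entirely by inserting the Euler Beta-integral for $(a)_n/(b)_n$, summing $\sum_n t^n/(n+1)=-t^{-1}\log(1-t)$ under the integral (legitimate by Tonelli, as you say), and identifying the resulting integral as $\partial_s B(a-1,s)\big|_{s=b-a}$, which yields exactly $\frac{\Gamma(a-1)\Gamma(b-a)}{\Gamma(b-1)}\bigl(\psi(b-a)-\psi(b-1)\bigr)$ and hence the stated formula after the elementary Gamma simplifications. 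Your approach is more self-contained (no appeal to Thomae's classical identity), at the cost of two justifications you correctly flag: differentiation under the integral sign in $s$, and the range $0<a\le 1$, where $B(a-1,s)$ is not given by a convergent integral even though the full integral with the $\log(1-t)$ factor converges; your fix by analytic continuation in $a$ (both sides analytic for $\Re a<b$ off $a=1$, agreeing on $(1,b)$, with the singularity at $a=1$ removable since $\psi(b-1)-\psi(b-a)$ vanishes there) is sound, and note that for the paper's application only $a=\frac43$, $b=\frac53$ is needed, so the delicate range never actually arises. The paper's Thomae route, by contrast, covers all $b>a>0$ in one stroke with no integral-representation issues. Also note that at $a=1$ the stated formula must be read as the limit $(b-1)\psi'(b-1)$ in either approach.
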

Using Lemma \ref{lemma7}, we can get \eqref{quzhi}. In fact, let
$a=\frac{4}{3},b=\frac{5}{3}$ in \eqref{quzhi2}, we have
\begin{align}
   &\hy(1,\frac{4}{3},1;2,\frac{5}{3};1)=2\Big(\psi(\frac{2}{3})-\psi(\frac{1}{3})\Big)\nonumber=\frac{2\pi}{\sqrt{3}}.\\[3mm]
\end{align}\label{quzhi3}
This is because since for $\Gamma$ function we have
\begin{align}
\Gamma(x)\Gamma(1-x)=\frac{\pi}{\sin(\pi x)}
\end{align}
By differentiating two sides of \eqref{quzhi3}, we have
\begin{align}
\Gamma'(x)\Gamma(1-x)-\Gamma(x)\Gamma'(1-x)=-\frac{\pi^2\cos( \pi x)}{(\sin \pi x)^2}.
\end{align}
Let $x=\frac{2}{3}$, we  get
\begin{align}
\frac{\Gamma'(\frac{2}{3})}{\Gamma(\frac{2}{3})}-\frac{\Gamma'(\frac{1}{3})}{\Gamma(\frac{1}{3})}=\frac{1}{\Gamma(\frac{2}{3})\Gamma(\frac{1}{3})}\frac{\pi^2\frac{1}{2}}{3/4}=\frac{\pi}{\sqrt{3}}.
\end{align}
\textbf{Proof of Lemma \ref{lemma7}}
\begin{proof}
 We need the Thomae's result[see \cite{thomae1879ueber}]:
 \begin{lemma}[Thomae]\label{Thomae}
 If $s:=e+f-a-b-c$ and $\Re(a)>0, \Re(s)>0$, then
 \begin{align}\label{thomae2}
 \hy(a,b,c;e,f;1)=\frac{\Gamma(e)\Gamma(f)\Gamma(s)}{\Gamma(a)\Gamma(s+b)\Gamma(s+c)}\hy(e-a,f-a,s;s+b,s+c;1).
 \end{align}
 \end{lemma}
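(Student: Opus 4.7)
The plan is to derive Thomae's transformation by combining the Eulerian Beta integral representation of $\hy$ with Euler's transformation for the Gauss hypergeometric function $\HY$, and then iterating once with a suitable permutation of parameters.

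First, I would substitute the identity $(c)_n/(f)_n = [\Gamma(f)/(\Gamma(c)\Gamma(f-c))]\int_0^1 t^{c+n-1}(1-t)^{f-c-1}\,dt$ termwise into the defining series of $\hy(a,b,c;e,f;1)$ and interchange sum and integral, obtaining the representation
$$\hy(a,b,c;e,f;1) = \frac{\Gamma(f)}{\Gamma(c)\Gamma(f-c)} \int_0^1 t^{c-1}(1-t)^{f-c-1} \HY(a,b;e;t)\,dt.$$
The assumption $\Re(s) > 0$ controls the behaviour at $t=1$ via the standard asymptotic $\HY(a,b;e;t) = O((1-t)^{e-a-b})$; the auxiliary conditions $\Re(c)>0$ and $\Re(f-c)>0$ can be arranged by permuting the numerator and denominator parameters (exploiting the built-in symmetries of $\hy$), and the final identity will extend to the full parameter range stated in the lemma by analytic continuation.

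Second, I would apply Euler's transformation $\HY(a,b;e;t) = (1-t)^{e-a-b}\HY(e-a,e-b;e;t)$ inside the integral. Since $(f-c-1)+(e-a-b) = s-1$, the integrand becomes $t^{c-1}(1-t)^{s-1}\HY(e-a,e-b;e;t)$, and reversing the Beta integral step with $f$ replaced by $s+c$ produces the intermediate identity
$$\hy(a,b,c;e,f;1) = \frac{\Gamma(f)\Gamma(s)}{\Gamma(f-c)\Gamma(s+c)}\,\hy(e-a,e-b,c;e,s+c;1).$$
This is a single transformation of Kummer-Thomae type, but its numerator parameters are $(e-a,e-b,c)$ rather than the $(e-a,f-a,s)$ demanded by the lemma, so a second pass is needed.

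Third, I would iterate the same procedure on the right-hand $\hy$, but only after permuting it as $\hy(c,e-a,e-b;s+c,e;1)$. The new Thomae parameter is $(s+c)+e-c-(e-a)-(e-b) = s+a+b-e = f-c$, and applying Steps~1-2 again together with the simplifications $(f-c)+(e-b) = s+a$ and $(s+c)-(e-a) = f-b$ yields
$$\hy(c,e-a,e-b;s+c,e;1) = \frac{\Gamma(e)\Gamma(f-c)}{\Gamma(b)\Gamma(s+a)}\,\hy(s,f-b,e-b;s+c,s+a;1).$$
Multiplying the two relations, the factor $\Gamma(f-c)$ cancels and one obtains
$$\hy(a,b,c;e,f;1) = \frac{\Gamma(e)\Gamma(f)\Gamma(s)}{\Gamma(b)\Gamma(s+a)\Gamma(s+c)}\,\hy(s,f-b,e-b;s+c,s+a;1).$$
Finally, invoking the symmetry of $\hy$ in its numerator parameters $(a,b,c)$ and its denominator parameters $(e,f)$, swapping $a\leftrightarrow b$ on both sides produces exactly the form claimed in Lemma \ref{Thomae}.

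The main obstacle is the parameter bookkeeping in the iteration: one must choose the permutation $(c,e-a,e-b;s+c,e)$ rather than some other cyclic rearrangement, so that after Euler's transformation the exponent of $(1-t)$ comes out to $s-1$ again (the same $s$ as in the original formula, by the displayed arithmetic) and the telescoping of $\Gamma$-factors delivers precisely $\Gamma(e)\Gamma(f)\Gamma(s)/[\Gamma(a)\Gamma(s+b)\Gamma(s+c)]$. Convergence of every integral and series used is handled uniformly by the hypotheses $\Re(a)>0$ and $\Re(s)>0$; beyond this the manipulations are routine.
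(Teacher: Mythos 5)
Your derivation is correct, and it is worth noting that the paper itself offers no proof of this lemma at all: it simply cites Thomae's 1879 paper and moves on. Your argument is the classical one (essentially the proof found in Bailey's tract and in Andrews--Askey--Roy): represent $\hy(a,b,c;e,f;1)$ by the Euler--Beta integral $\frac{\Gamma(f)}{\Gamma(c)\Gamma(f-c)}\int_0^1 t^{c-1}(1-t)^{f-c-1}\HY(a,b;e;t)\,dt$, apply Euler's transformation to turn the exponent of $(1-t)$ into $s-1$, reverse the Beta step to get the intermediate relation with parameters $(e-a,e-b,c;e,s+c)$, and iterate once after permuting. I checked the bookkeeping: the second pass has Thomae parameter $s'=f-c$, and the identities $(f-c)+(e-b)=s+a$ and $(s+c)-(e-a)=f-b$ do hold, the $\Gamma(f-c)$ factors cancel, and the final swap $a\leftrightarrow b$ (legitimate by the symmetry of $\hy$ in its upper and in its lower parameters, since $s$ is symmetric in $a,b$) lands exactly on the stated form; the condition $\Re(a)>0$ is precisely what makes the right-hand series converge, since its own Thomae parameter equals $a$. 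Two small points of care: near $t=1$ one has $\HY(a,b;e;t)\sim C_1+C_2(1-t)^{e-a-b}$ rather than $O((1-t)^{e-a-b})$, so convergence of the integral genuinely requires both $\Re(f-c)>0$ and $\Re(s)>0$; and the second pass imposes further temporary constraints such as $\Re(b)>0$ and $\Re(e-b)>0$. As you say, these are removed at the end by analytic continuation in the parameters, so the argument is complete. Your proof therefore supplies a self-contained justification where the paper relies on an external reference.
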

 Let $a=b=1, e=2$ in \eqref{thomae2}, we have $s=f-c$, so we have
 \begin{align}
   \hy(1,1,c;2,f;1)=&\frac{\Gamma(2)\Gamma(f)\Gamma(f-c)}{\Gamma(1)\Gamma(f-c+1)\Gamma(f)}\hy(1,f-1,f-c;f-c+1,f;1)\nonumber\\[3mm]
   &=\frac{1}{f-c}\sum\limits_{n=0}^\infty\frac{n!(f-1)_{n}(f-c)_{n}}{(f)_{n}(f-c+1)_{n}} \frac{1}{n!}\nonumber\\[3mm]
   &=\frac{1}{f-c}\sum\limits_{n=0}^\infty\frac{f-1}{f-1+n}\frac{f-c}{f-c+n}\nonumber\\[3mm]
   &=(f-1)\sum\limits_{n=0}^\infty\frac{1}{(f-1+n)(f-c+n)}.
 \end{align}
 In the third equation we used
 $$
 \frac{(x+1)_n}{(x)_n}=\frac{x+n}{x}.
 $$

 Also by the relation of the digamma function and Gamma function, we have
 \begin{align}
 \psi(z)=-\gamma+\sum\limits_{n=0}^\infty (\frac{1}{n}-\frac{1}{n+z}),\,\,z\neq0,-1,-2,-3,....
 \end{align}
 where $\gamma$ is  the Euler-Mascheroni constant. So we have
 \begin{align}
 &\psi(f-1)-\psi(f-c)=\sum\limits_{n=0}^\infty (\frac{1}{n+f-c}-\frac{1}{n+f-1})\nonumber\\[3mm]
 =&(c-1)\sum\limits_{n=0}^\infty \frac{1}{(n+f-c)(n+f-1)}\nonumber\\[3mm]
 =&\frac{c-1}{f-1}\hy(1,1,c;2,f;1).
 \end{align}
 This finishes the proof of Lemma \ref{lemma1}.
 \end{proof}
\bibliography{loop}

\begin{thebibliography}{10}

\bibitem{beliaev2013some}
Dmitry Beliaev and Fredrik~Johansson Viklund.
\newblock Some remarks on {SLE} bubbles and schramm's two-point observable.
\newblock {\em Communications in Mathematical Physics}, 320(2):379--394, 2013.

\bibitem{lawler2003conformal}
Gregory Lawler, Oded Schramm, and Wendelin Werner.
\newblock Conformal restriction: the chordal case.
\newblock {\em Journal of the American Mathematical Society}, 16(4):917--955,
  2003.

\bibitem{gamsa2006correlation}
Adam Gamsa and John Cardy.
\newblock Correlation functions of twist operators applied to single
  self-avoiding loops.
\newblock {\em Journal of Physics A: Mathematical and General}, 39(41):12983,
  2006.

\bibitem{lawler2004brownian}
Gregory~F Lawler and Wendelin Werner.
\newblock The brownian loop soup.
\newblock {\em Probability theory and related fields}, 128(4):565--588, 2004.

\bibitem{rohde2011basic}
Steffen Rohde and Oded Schramm.
\newblock Basic properties of {SLE}.
\newblock In {\em Selected Works of Oded Schramm}, pages 989--1030. Springer,
  2011.

\bibitem{lawler2011conformal}
Gregory~F Lawler, Oded Schramm, and Wendelin Werner.
\newblock Conformal invariance of planar loop-erased random walks and uniform
  spanning trees.
\newblock In {\em Selected Works of Oded Schramm}, pages 931--987. Springer,
  2011.

\bibitem{schramm2001percolation}
Oded Schramm et~al.
\newblock A percolation formula.
\newblock {\em Electron. Comm. Probab}, 6:115--120, 2001.

\bibitem{lawler2008conformally}
Gregory~F Lawler.
\newblock {\em Conformally invariant processes in the plane}.
\newblock Number 114. American Mathematical Soc., 2008.

\bibitem{beals2010special}
Richard Beals and Roderick Wong.
\newblock {\em Special functions: a graduate text}, volume 126.
\newblock Cambridge University Press, 2010.

\bibitem{thomae1879ueber}
J~Thomae.
\newblock Ueber die functionen, welche durch reihen von der form dargestellt
  werden....
\newblock {\em Journal f{\"u}r die Reine und angewandte Mathematik}, 87:26--73,
  1879.

\end{thebibliography}
\bibliographystyle{unsrt}
\end{document}